\newcommand{\dolphins}{\texttt{dolphins}\xspace}
\newcommand{\karate}{\texttt{karate}\xspace}
\newcommand{\lesmis}{\texttt{lesmis}\xspace}
\newcommand{\astro}{\texttt{astro}\xspace}
\newcommand{\enron}{\texttt{enron}\xspace}
\newcommand{\fb}{\texttt{fb1912}\xspace}
\newcommand{\hepph}{\texttt{hepph}\xspace}
\newcommand{\dblp}{\texttt{dblp}\xspace}
\newcommand{\gowalla}{\texttt{gowalla}\xspace}
\newcommand{\roadnet}{\texttt{roadnet}\xspace}
\newcommand{\skitter}{\texttt{skitter}\xspace}
\newcommand{\airports}{\texttt{airports}\xspace}
\newcommand{\trains}{\texttt{trains}\xspace}
\newcommand{\set}[1]{\left\{#1\right\}}
\newcommand{\pr}[1]{\left(#1\right)}
\newcommand{\fpr}[1]{\mathopen{}\left(#1\right)}
\newcommand{\abs}[1]{{\left|#1\right|}}
\newcommand{\enset}[2]{\left\{#1 ,\ldots , #2\right\}}
\newcommand{\real}{\mathbb{R}}
\newcommand{\np}{\textbf{NP}}
\newcommand{\funcdef}[3]{{#1}:{#2} \to {#3}}
\newcommand{\define}{\leftarrow}
\newcommand{\col}[1]{\mathcal{#1}}
\DeclareRobustCommand{\dispfunc}[2]{%
  \ensuremath{%
  \ifthenelse{\equal{#2}{}}%
    {\mathit{#1}}%
    {\mathit{#1}\fpr{#2}}}}
\newcommand{\cost}[1]{\ensuremath{\dispfunc{cost}{#1}}}
\newcommand{\bigo}[1]{\ensuremath{\dispfunc{\mathcal{O}}{#1}}}
\newcommand{\edges}[1]{\ensuremath{\dispfunc{E}{#1}}}
\newcommand{\crossedges}[1]{\ensuremath{\dispfunc{E_{\times}}{#1}}}
\newcommand{\marginaledges}[1]{\ensuremath{\dispfunc{E_{\Delta}}{#1}}}
\newcommand{\density}[1]{\ensuremath{\dispfunc{d}{#1}}}
\newcommand{\compact}[1]{\ensuremath{\dispfunc{f}{#1}}}
\newcommand{\compactgraph}[1]{\ensuremath{\dispfunc{F}{#1}}}
\newcommand{\prof}[1]{\ensuremath{\dispfunc{p}{#1}}}
\newcommand{\din}[1]{\ensuremath{\dispfunc{\mathrm{in}}{#1}}}
\newcommand{\dg}[1]{\ensuremath{\dispfunc{\operatorname{deg}}{#1}}}
\newcommand{\degree}[1]{\ensuremath{\mathrm{deg}({#1})}}
\newcommand{\adg}[1]{\ensuremath{\dispfunc{\operatorname{adg}}{#1}}}
\newcommand{\peel}{\textsc{GreedyLD}\xspace}
\newcommand{\core}{\textsc{Core}\xspace}
\newcommand{\decompose}{\textsc{ExactLD}\xspace}
\newcommand{\segment}{\textsc{Segment}\xspace}
\newtheorem{problem}{Problem}
\newcommand{\spara}[1]{{\smallskip\noindent{\bf {#1}}}}
\newcommand{\eparabegin}{\begin{description}}
\newcommand{\eparaend}{\end{description}}
\definecolor{yafaxiscolor}{rgb}{0.3, 0.3, 0.3}
\definecolor{yafcolor1}{rgb}{0.4, 0.165, 0.553}
\definecolor{yafcolor2}{rgb}{0.949, 0.482, 0.216}
\definecolor{yafcolor3}{rgb}{0.47, 0.549, 0.306}
\definecolor{yafcolor4}{rgb}{0.925, 0.165, 0.224}
\definecolor{yafcolor5}{rgb}{0.141, 0.345, 0.643}
\definecolor{yafcolor6}{rgb}{0.965, 0.933, 0.267}
\definecolor{yafcolor7}{rgb}{0.627, 0.118, 0.165}
\definecolor{yafcolor8}{rgb}{0.878, 0.475, 0.686}
\newlength{\yafaxispad}
\newlength{\yaftlpad}
\newlength{\yaflabelpad}
\newlength{\yafaxiswidth}
\newlength{\yafticklen}
\def\pgfplots@drawtickgridlines@INSTALLCLIP@onorientedsurf#1{}
\newcommand{\yafdrawxaxis}[2]{
	\pgfplotstransformcoordinatex{#1}\let\xmincoord=\pgfmathresult 
	\pgfplotstransformcoordinatex{#2}\let\xmaxcoord=\pgfmathresult 
	\pgfsetlinewidth{\yafaxiswidth} 
	\pgfsetcolor{yafaxiscolor}
	\pgfpathmoveto{\pgfpointadd{\pgfpointadd{\pgfplotspointrelaxisxy{0}{0}}{\pgfqpointxy{\xmincoord}{0}}}{\pgfqpoint{-0.5\yafaxiswidth}{\yafaxispad}}}
	\pgfpathlineto{\pgfpointadd{\pgfpointadd{\pgfplotspointrelaxisxy{0}{0}}{\pgfqpointxy{\xmaxcoord}{0}}}{\pgfqpoint{0.5\yafaxiswidth}{\yafaxispad}}}
	\pgfusepath{stroke}

}
\newcommand{\yafdrawyaxis}[2]{
	\pgfplotstransformcoordinatey{#1}\let\ymincoord=\pgfmathresult 
	\pgfplotstransformcoordinatey{#2}\let\ymaxcoord=\pgfmathresult 
	\pgfsetlinewidth{\yafaxiswidth} 
	\pgfsetcolor{yafaxiscolor}
	\pgfpathmoveto{\pgfpointadd{\pgfpointadd{\pgfplotspointrelaxisxy{0}{0}}{\pgfqpointxy{0}{\ymincoord}}}{\pgfqpoint{\yafaxispad}{-0.5\yafaxiswidth}}}
	\pgfpathlineto{\pgfpointadd{\pgfpointadd{\pgfplotspointrelaxisxy{0}{0}}{\pgfqpointxy{0}{\ymaxcoord}}}{\pgfqpoint{\yafaxispad}{0.5\yafaxiswidth}}}
	\pgfusepath{stroke}
}
\newcommand{\yafdrawaxis}[4]{\yafdrawxaxis{#1}{#2}\yafdrawyaxis{#3}{#4}}
\pgfplotsset{axis y line=left, axis x line=bottom,
	tick align=outside,
	compat = 1.3,
	tickwidth=\yafticklen,
	clip = false,
	every axis title shift = 0pt,
    x axis line style= {-, line width = 0pt, opacity = 0},
    y axis line style= {-, line width = 0pt, opacity = 0},
    x tick style= {line width = \yafaxiswidth, color=yafaxiscolor, yshift = \yafaxispad},
    y tick style= {line width = \yafaxiswidth, color=yafaxiscolor, xshift = \yafaxispad},
    x tick label style = {font=\scriptsize, yshift = \yaftlpad},
    y tick label style = {font=\scriptsize, xshift = \yaftlpad},
    every axis y label/.style = {at = {(ticklabel cs:0.5)}, rotate=90, anchor=center, font=\scriptsize, yshift = -\yaflabelpad},
    every axis x label/.style = {at = {(ticklabel cs:0.5)}, anchor=center, font=\scriptsize, yshift = \yaflabelpad},
    x tick label style = {font=\scriptsize, yshift = 1pt},
    grid = major,
    major grid style  = {dash pattern = on 1pt off 3 pt},
	every axis plot post/.append style= {line width=\yafaxiswidth} ,
	legend cell align = left,
	legend style = {inner sep = 1pt, cells = {font=\scriptsize}},
	legend image code/.code={%
		\draw[mark repeat=2,mark phase=2,#1] 
		plot coordinates { (0cm,0cm) (0.15cm,0cm) (0.3cm,0cm) };% 
	} 
}
\begin{document}

\title{Density-friendly Graph Decomposition}

\author{Nikolaj Tatti}
\affiliation{%
\institution{HIIT, University of Helsinki, Aalto University}
\city{Helsinki}
\country{Finland}}

\begin{abstract}
Decomposing a graph into a hierarchical structure via $k$-core
analysis is a standard operation in any modern graph-mining toolkit.
$k$-core decomposition is a simple and efficient method that allows
to analyze a graph beyond its mere degree distribution.
More specifically,
it is used to identify areas in the graph of increasing centrality and
connectedness, and it allows to reveal the structural organization of
the graph. 

Despite the fact that $k$-core analysis relies on vertex degrees, 
$k$-cores do not satisfy a certain, rather natural, density
property.
Simply put, the most central $k$-core is not necessarily the densest
subgraph. 
This inconsistency between $k$-cores and graph density provides the
basis of our study. 

We start by defining what it means for a subgraph to be 
{\em locally-dense},  and we show that our definition entails a 
nested chain decomposition of the graph, 
similar to the one given by $k$-cores, 
but in this case the components are arranged in order of increasing density. 
We show that such a {\em locally-dense decomposition} for a graph
$G=(V,E)$ can be computed in polynomial time. 
The running time of the exact decomposition algorithm is 
$\bigo{|V|^2|E|}$ but is significantly faster in practice.
In addition, we develop a linear-time algorithm
that provides a factor-2 approximation to the optimal locally-dense decomposition.
Furthermore, we show that the $k$-core decomposition is also a factor-2
approximation, however, as demonstrated by our experimental evaluation, 
in practice $k$-cores have different structure than locally-dense
subgraphs, and as predicted by the theory,  
$k$-cores are not always well-aligned with graph density.
\end{abstract}

\thanks{The research described in
this paper builds upon and extends the work appearing in \nobreak{WWW 2015} by \citet{tatti:2015:density}.}

\maketitle

\section{Introduction}
\label{sec:intro}

Finding dense subgraphs and communities is one of the most
well-studied problems in graph mining. 
Techniques for identifying dense subgraphs are used in a large number
of application domains, from biology, to web mining, to analysis of
social and information networks.
Among the many concepts that have been proposed for discovering dense
subgraphs, $k$-{\em cores} are particularly attractive for the
simplicity of their definition and the fact that they can be identified in
linear time.

The $k$-core of a graph is defined as a maximal subgraph in which every
vertex is connected to at least $k$ other vertices within that
subgraph. 
A $k$-{\em core decomposition} of a graph consists of finding the set
of all $k$-cores. 
A nice property is that the set of all $k$-cores 
forms a nested sequence of subgraphs, one included in the next. 
This makes the $k$-core decomposition of a graph a useful tool in
analyzing a graph by identifying areas of increasing centrality and
connectedness, and revealing the structural organization of the
graph.  
As a result, $k$-core decomposition has been applied to a number of
different applications, such as 
modeling of random graphs~\citep{bollobas1984evolution},
analysis of the internet topology~\citep{Carmi03072007},
social-network analysis~\citep{Seidman:1983tv}, 
bioinformatics~\citep{bader2003automated},
analysis of connection matrices of the human brain~\citep{hagmann08brain},
graph visualization~\citep{DBLP:journals/corr/abs-cs-0504107}, 
as well as
influence analysis~\citep{kitsak10influence,Ugander17042012} and team formation~\citep{Bonchi:2014kh}.

The fact that the $k$-core decomposition of a graph gives a chain of
subgraphs where vertex degrees are higher in the inner cores, suggests
that we should expect that the inner cores are, in certain sense, more
dense or more connected than the outer cores.
As we will show shortly, this statement is not true. 
Furthermore, in this paper we show how to obtain a graph decomposition
for which the statement is true, namely, the inner subgraphs of the
decomposition are denser than the outer ones. 
To quantify density, we adopt a classic notion
used in the densest-subgraph problem~\citep{Charikar:2000tg, Goldberg:1984up}, 
where density is defined as the ratio between the edges and the
vertices of a subgraph. 
This density definition can be also viewed as the average degree
divided by 2.

Our motivating observation is that $k$-cores are not ordered according
to this density definition.
The next example demonstrates that the most inner core is \emph{not}
necessarily the densest subgraph, and in fact, we can increase the
density by either adding or removing vertices.

\begin{example}\em
\label{ex:toy}
Consider the graph $G_1$ shown in Figure~\ref{fig:toy}, 
consisting of 6 vertices and 9 edges. 
The density of the whole graph is $9/6 = 1.5$.
The graph has three $k$-cores:
a $3$-core marked as $C_1$, a 
$2$-core marked as $C_2$, and 
a $1$-core, corresponding the the whole graph and marked as $C_3$.
The core $C_1$ has density $6/4 = 1.5$ 
(it contains $6$ edges and $4$ vertices),
while the core $C_2$ has density $8/5 = 1.6$
(it contains $8$ edges and $5$ vertices).
In other words, $C_1$ has lower density than $C_2$,
despite being an inner core.

Let us now consider $G_2$ shown in Figure~\ref{fig:toy}.  
This graph has a single core, 
namely a $2$-core, containing the whole graph.  
The density of this core is equal to $11/8 = 1.375$. 
However, a subgraph $B_1$ contains $7$ edges and $5$ vertices,
giving us density $7/5 = 1.4$, 
which is higher than the density of the only core.
\end{example}

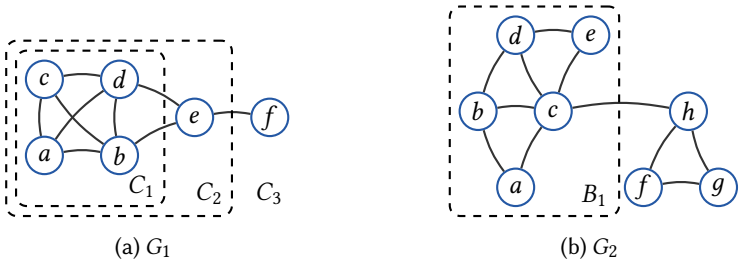
\begin{figure}[ht]
\tikzstyle{exnode} = [thick, draw = yafcolor5, circle, inner sep = 1pt, text=black, minimum width=14pt]
\tikzstyle{exedge} = [black!80, thick]
\tikzstyle{exbox} = [dashed, black, thick, draw, rounded corners]
\hspace*{\fill}
\subcaptionbox{$G_1$}{
\begin{tikzpicture}
\node[exnode] (n1) at (0, 0) {$a$};
\node[exnode] (n2) at (1, 0) {$b$};
\node[exnode] (n3) at (0, 1) {$c$};
\node[exnode] (n4) at (1, 1) {$d$};
\node[exnode] (n5) at (2, 0.5) {$e$};
\node[exnode] (n6) at (3, 0.5) {$f$};

\node[inner sep = 0pt] (l1) at (1.3, -0.4) {$C_1$};
\node[inner sep = 0pt] (l2) at (2.2, -0.5) {$C_2$};
\node[inner sep = 0pt] (l3) at (3, -0.5) {$C_3$};

\draw[exedge, bend left = 10] (n1) edge (n2);
\draw[exedge, bend left = 10] (n1) edge (n3);
\draw[exedge, bend left = 10] (n1) edge (n4);
\draw[exedge, bend left = 10] (n2) edge (n3);
\draw[exedge, bend left = 10] (n2) edge (n4);
\draw[exedge, bend left = 10] (n3) edge (n4);

\draw[exedge, bend left = 10] (n2) edge (n5);
\draw[exedge, bend left = 10] (n4) edge (n5);
\draw[exedge, bend left = 10] (n5) edge (n6);

\node[exbox, fit = (n1) (n2) (n3) (n4) (l1)] (b1) {};
\node[exbox, fit = (b1) (n5) (n3) (n4) (l2)] {};
\end{tikzpicture}
}\hfill
\subcaptionbox{$G_2$}{
\begin{tikzpicture}
\node[exnode] (n1) at (0.5, -1) {$a$};
\node[exnode] (n2) at (0, 0) {$b$};
\node[exnode] (n3) at (1, 0) {$c$};
\node[exnode] (n4) at (0.5, 1) {$d$};
\node[exnode] (n8) at (1.5, 1) {$e$};

\node[exnode] (n5) at (2.2, -1) {$f$};
\node[exnode] (n6) at (3.2, -1) {$g$};
\node[exnode] (n7) at (2.8, 0) {$h$};

\node[inner sep = 0pt] (l1) at (1.55, -1.1) {$B_1$};

\draw[exedge, bend left = 10] (n1) edge (n2);
\draw[exedge, bend left = 10] (n1) edge (n3);
\draw[exedge, bend left = 10] (n2) edge (n3);
\draw[exedge, bend left = 10] (n2) edge (n4);
\draw[exedge, bend left = 10] (n3) edge (n4);
\draw[exedge, bend left = 10] (n3) edge (n8);
\draw[exedge, bend left = 10] (n4) edge (n8);

\draw[exedge, bend left = 10] (n3) edge (n7);

\draw[exedge, bend left = 10] (n6) edge (n7);
\draw[exedge, bend left = 10] (n5) edge (n7);
\draw[exedge, bend left = 10] (n5) edge (n6);

\node[exbox, fit = (n1) (n2) (n3) (n4) (l1) (n8)] (b1) {};
\end{tikzpicture}
}\hspace*{\fill}
\caption{Toy graphs used in Example~\ref{ex:toy}.}
\label{fig:toy}
\end{figure}

This example motivates us to define an alternative, 
more density-friendly, graph decomposition, which we call 
\emph{locally-dense decomposition}. 
We are interested in a decomposition such that 
($i$) the density of the inner subgraphs is higher than the density of
the outer subgraphs, 
($ii$) the most inner subgraph corresponds to the densest subgraph,
and 
($iii$) we can compute or approximate the decomposition efficiently.

We achieve our goals by first defining a \emph{locally-dense} subgraph,
essentially a subgraph whose density cannot be improved by adding and
deleting vertices. 
We show that these subgraphs are arranged into a hierarchy such
that the density decreases as we go towards outer subgraphs and that
the most inner subgraph is in fact the densest subgraph.

We provide two efficient algorithms to discover this hierarchy. 
The first algorithm extends the exact algorithm for discovering the
densest subgraph given by~\citet{Goldberg:1984up}. 
This algorithm is based on solving a minimum cut
problem on a certain graph that depends on a parameter $\alpha$.  
Goldberg showed that for a certain value $\alpha$
(which can be found by binary search), 
the minimum cut recovers the densest subgraph. 
One of our contributions is to shed more light into Goldberg's
algorithm and show that the same construction allows to discover
\emph{all} locally-dense subgraphs by varying $\alpha$.

Our second algorithm extends the linear-time algorithm by~\citet{Charikar:2000tg} for
approximating dense subgraphs.
This algorithm first orders vertices by deleting iteratively a vertex
with the smallest degree, and then selects the densest subgraph
respecting the order. 
We extend this idea by using the same order, and finding first the
densest subgraph respecting the order, and then iteratively finding
the second densest subgraph containing the first subgraph, and so on. 
We show that this algorithm can be executed in linear time and it
achieves a factor-$2$ approximation guarantee.

Charikar's algorithm and the algorithm for discovering a $k$-core
decomposition are very similar: they both order vertices by deleting vertices with the
smallest degree. 
We show that this connection is profoundly deep and
we demonstrate that a $k$-core decomposition provides a
\mbox{factor-$2$} approximation for locally-dense decomposition.
On the other hand, our experimental evaluation shows that 
in practice $k$-cores have different structure than locally-dense
subgraphs, and as predicted by the theory,  
$k$-cores are not always well-aligned with graph density.

It is possible that the decomposition results a significant amount of subgraphs.
In such a case it may be useful to constraint the number of the subgraphs.
We approach this problem by defining an optimization criterion for a segmentation of $k$
nested subgraphs. The objective function will be based on a statistical model.
We will show that to optimize this particular objective, we need to
(\emph{i}) find locally-dense subgraphs, and
(\emph{ii}) reduce the number with a dynamic program.
We also show that if we replace the first step with the greedy algorithm,
then the resulting algorithm yields a factor-2 approximation guarantee.

The remainder of paper is organized as follows.  We give preliminary notation
in Section~\ref{sec:prel}.  We introduce the locally-dense subgraphs in
Section~\ref{sec:monotonic}, present algorithms for discovering the subgraphs
in Section~\ref{sec:discovery}, and describe the connection to $k$-core
decomposition in Section~\ref{sec:core}. 
We introduce the constrained version of the problem in Section~\ref{sec:segmentation}.
We present the related work in
Section~\ref{sec:related} and present the experiments in
Section~\ref{sec:experiments}.
Finally, we conclude the paper with discussion in Section~\ref{sec:conclusions}.

\section{Preliminaries}
\label{sec:prel}

\spara{Graph density.}
Let $G=(V,E)$ be a graph with $|V|=n$ vertices and $|E|=m$ edges. 
Given a subset of vertices $X\subseteq V$, 
it is common to define $\edges{X}=\set{(x,y)\in E \mid x,y \in X}$, that is,  
the edges of $G$ that have both end-points in $X$. 
The \emph{density} of the vertex set $X$ is then defined to be
\[
	\density{X} = \frac{\abs{E(X)}}{\abs{X}}, 
\]
that is, half of the \emph{average degree} of the subgraph induced by $X$.
The set of vertices $X\subseteq V$ that maximizes the density measure 
$\density{X}$ is the \emph{densest subgraph} of $G$.\footnote{We should point
out that density is also often defined as $\abs{E(X)} / {\abs{X} \choose 2}$.
This is not the case for this paper.}

The problem of finding the densest subgraph can be solved in
polynomial time. 
A very elegant solution that involves a mapping to a series of
minimum-cut problems was given by~\citet{Goldberg:1984up}.
As the fastest algorithm to solve the minimum-cut problem runs in 
$\bigo{mn}$ time, this approach is not scalable to very large graphs. 
On the other hand, there exists a linear-time algorithm that provides a
factor-$2$ approximation to the densest-subgraph problem~\citep{Asahiro:1996uq,Charikar:2000tg}.
This is a greedy algorithm, which starts with the input graph, and
iteratively removes the vertex with the lowest degree, until left
with an empty graph. Among all subgraphs considered during this
vertex-removal process, the algorithm returns the densest.

Next we will provide graph-density definitions that relate pairs of vertex
sets. Given two non-overlapping sets of vertices $X$ and $Y$ we first define the  
{\em cross  edges} between $X$ and $Y$ as
\[
	\crossedges{X, Y} = \set{(x, y) \in E \mid x \in X, y \in Y}\quad.
\]
We then define the {\em marginal edges} from $X$ with respect to $Y$.
Those are the edges that have one end-point in $X$ and the other
end-point in either $X$ or~$Y$, that is,
\[
	\marginaledges{X,Y}=\edges{X} \cup \crossedges{X,Y}\quad.
\]
The set $\marginaledges{X,Y}$ represents the additional edges that
will be included in the induced subgraph of $Y$ if we expand $Y$ by
adding~$X$.

Assume that $X$ and $Y$ are non-overlapping.
Then, we define the \emph{outer density} of $X$ with
respect to $Y$ as
\[
	\density{X, Y} = \frac{\abs{\marginaledges{X, Y}}}{\abs{X}}\quad.
\]
That is, these are the extra edges, on average, that we bring to $Y$ if we expand
it by appending $X$.

Now that we have defined a special case when $X$ and $Y$ are disjoint, we can now consider
a more general case, that is, when $X$ and $Y$ are overlapping. Here 
we would be interested in the outer density of vertices in $X$ that are \emph{not already
included} in~$Y$. Hence, we will expand the definition of outer density to a more general case by defining
\[
	\density{X, Y} = \density{X \setminus Y, Y} = \frac{\abs{\marginaledges{X \setminus Y, Y}}}{\abs{X \setminus Y}}\quad.
\]

\spara{$\mathbf{k}$-cores.}
We briefly review the basic background regarding $k$-cores.
The concept 
was introduced by~\citet{Seidman:1983tv}.
 
Given a graph $G=(V,E)$, 
a set of vertices $X\subseteq V$ is a $k$-core if 
every vertex in the subgraph induced by $X$ has degree at least $k$,
and $X$ is maximal with respect to this property. 
A $k$-core of $G$ can be obtained by recursively removing all the vertices
of degree less than $k$, until all vertices in the remaining graph
have degree at least $k$.

It is not hard to see that 
if $\set{C_i}$ is the set of all distinct $k$-cores of $G$
then $\set{C_i}$ forms a nested chain
\[
	\emptyset = C_0\subsetneq C_1 \subsetneq \cdots \subsetneq C_\ell = V\quad.
\]
Furthermore, the set of vertices $S_k$ that belong in a $k$-core but
not in a $(k-1)$-core is called $k$-\emph{shell}.

The $k$-\emph{core decomposition} of $G$ is the process of identifying
all $k$-cores (and all $k$-shells).
Therefore, the $k$-core decomposition of a graph identifies
progressively the internal cores and decomposes the graph shell by
shell. 
A linear-time algorithm to obtain the $k$-core decomposition 
was given by~\citet{Matula1983smallest}.
The algorithm starts by provisionally assigning each vertex $v$ to a
core of index $\degree{v}$, an upper bound to the correct core of a vertex.
It then repeatedly removes the vertex with the smallest degree, and
updates the core index of the neighbors of the removed vertex. 
Note the similarity of this algorithm, with the $2$-approximation
algorithm for the densest-subgraph problem~\citep{Charikar:2000tg}.

\section{Locally-dense graph decomposition}
\label{sec:monotonic}

In this section we present the main concept introduced in this paper,
the {\em locally-dense decomposition} of a graph.
We also discuss the properties of this decomposition.
We start by defining the concept of a {\em locally-dense subgraph}.
\begin{definition}
A set of vertices $W$ is \emph{locally dense} if there are no 
$X \subseteq W$ and $Y$ satisfying
$Y \cap W = \emptyset$ such that
\[
	\density{X, W \setminus X} \le \density{Y, W}\quad.
\]
\end{definition}
In other words, 
for $W$ to be locally dense 
there should not be an $X$ ``inside'' $W$ and a $Y$ ``outside'' $W$ 
so that the density that $Y$ brings to $W$ is larger than the density
that $X$ brings.

Due to the notational simplicity, we will often refer to these sets of vertices as
subgraphs.

Interestingly, the property of being locally dense induces a nested chain of
subgraphs in $G$.
\begin{proposition}
Let $U$ and $W$ be locally-dense subgraphs. 
Then either $U \subseteq W$ or $W \subseteq U$.
\end{proposition}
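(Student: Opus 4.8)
The plan is to argue by contradiction. Suppose $U$ and $W$ are both locally dense but neither is contained in the other. Then $A := U \setminus W$ and $B := W \setminus U$ are both nonempty; write $C := U \cap W$ for the overlap. The whole argument amounts to instantiating the definition of local density twice — once for $W$ and once for $U$ — with the ``inside'' and ``outside'' roles played by $B$ and $A$ in one case and by $A$ and $B$ in the other, and then chaining the resulting inequalities with an elementary monotonicity fact about outer density.

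First I would apply local density of $W$ to the pair $X = B \subseteq W$ and $Y = A$ (which is disjoint from $W$). Since $W \setminus B = C$, the definition forbids $\density{B,C} \le \density{A,W}$, hence $\density{B,C} > \density{A,W}$. Symmetrically, applying local density of $U$ to $X = A \subseteq U$ and $Y = B$ (disjoint from $U$), and using $U \setminus A = C$, gives $\density{A,C} > \density{B,U}$.

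Second I would record the monotonicity of outer density in its second argument: whenever $C \subseteq W$ and $Z$ is disjoint from $W$ (hence from $C$), we have $\density{Z,W} \ge \density{Z,C}$. This is immediate from the definitions, since $\marginaledges{Z,C} = \edges{Z} \cup \crossedges{Z,C} \subseteq \edges{Z} \cup \crossedges{Z,W} = \marginaledges{Z,W}$ while the denominator $\abs{Z}$ is unchanged. Instantiating with $Z = A$ (and $C \subseteq W$) gives $\density{A,W} \ge \density{A,C}$, and with $Z = B$ (and $C \subseteq U$) gives $\density{B,U} \ge \density{B,C}$.

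Putting the four inequalities together yields $\density{B,C} > \density{A,W} \ge \density{A,C} > \density{B,U} \ge \density{B,C}$, so $\density{B,C} > \density{B,C}$, a contradiction; hence one of the two containments must hold. The only points needing care are bookkeeping ones: checking that the substituted sets are of the form required by the definition ($B \subseteq W$, $A$ disjoint from $W$, etc.) and that every quantity is well defined — which holds precisely because $A$ and $B$ are nonempty — including the degenerate case $C = \emptyset$, where the outer densities collapse to ordinary densities and the argument is unchanged. I do not expect a genuine obstacle here; the one step requiring insight is noticing that the right substitution is the ``crossed'' one, feeding $W \setminus U$ into the definition for $W$ and $U \setminus W$ into the definition for $U$.
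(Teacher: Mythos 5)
Your proof is correct and takes essentially the same route as the paper: both decompose the situation into $U \setminus W$, $W \setminus U$, and $U \cap W$, apply the definition of local density with the crossed substitution, and invoke the monotonicity of outer density in its second argument (more cross edges when the reference set grows). The only difference is presentational --- the paper splits into two cases according to which of $\density{U \setminus W, U \cap W}$ and $\density{W \setminus U, U \cap W}$ is larger and contradicts one local-density hypothesis per case, whereas you chain all four inequalities into a single strict cycle, which avoids the case split.
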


\begin{proof}
Assume otherwise. 
Define $X = U \setminus W$ and 
$Y = W \setminus U$. 
Both $X$ and $Y$ should be non-empty sets. 
Then either 
$\density{X, U \cap W} \leq \density{Y, U \cap W}$ or 
$\density{X, U \cap W} > \density{Y, U \cap W}$.
Assume the former. This implies
\begin{eqnarray*}
	\density{X, U \setminus X} = \density{X, U \cap W} \leq  \density{Y, U \cap W} \leq  \density{Y, U},
\end{eqnarray*}
which contradicts the fact that $U$ is locally dense. 
For the first equality we used the fact that 
$U \setminus X = U \cap W$,
while for the last inequality we used the fact that 
$\crossedges{Y, U \cap W} \leq \crossedges{Y, U}$.

The case 
$\density{X, U \cap W} > \density{Y, U \cap  W}$ 
is similar.
\end{proof}

The proposition implies that the set of locally-dense subgraphs of a
graph forms a nested chain, 
in the same way that the set of $k$-cores does.

\begin{corollary}
\label{corollary:chain}
A set of locally-dense subgraphs can be arranged into a sequence 
$B_0 \subsetneq B_1 \subsetneq \cdots \subsetneq B_k$,
where $k \leq \abs{V}$.  Moreover, $\density{B_{i}, B_{i - 1}} > \density{B_{i + 1}, B_{i}}$ for $1 \leq i < k$.
\end{corollary}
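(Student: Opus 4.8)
The plan is to read off the chain structure directly from the preceding proposition and then to isolate the strict density inequality as the only part that requires work.

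\textbf{Step 1 (the chain).} By the preceding proposition any two locally-dense subgraphs of $G$ are comparable under inclusion, so the family of \emph{all} locally-dense subgraphs is totally ordered; I would simply list its members in increasing order as $B_0 \subsetneq B_1 \subsetneq \cdots \subsetneq B_k$. I would put $B_0 = \emptyset$ and $B_k = V$ at the ends, checking that both are locally dense: for $W = V$ there is no nonempty $Y$ with $Y \cap W = \emptyset$, and for $W = \emptyset$ there is no admissible pair $(X,Y)$ for which the defining inequality is even meaningful, so in both cases the condition holds vacuously. Since every inclusion is strict and every $B_i \subseteq V$, the sizes $\abs{B_0} < \abs{B_1} < \cdots < \abs{B_k}$ are $k+1$ strictly increasing integers in $\set{0, \dots, \abs{V}}$, which forces $k \le \abs{V}$.

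\textbf{Step 2 (the strict inequality).} Fix $i$ with $1 \le i < k$ and set $X = B_i \setminus B_{i-1}$ and $Y = B_{i+1} \setminus B_i$; both are nonempty because the corresponding inclusions are strict, so every density written below is well defined. The two set identities I need are $Y \cap B_i = \emptyset$ and $B_i \setminus X = B_{i-1}$ (the latter using $B_{i-1} \subseteq B_i$). Now apply the definition of local density to $W = B_i$ with exactly this $X \subseteq W$ and this $Y$: since $B_i$ is locally dense, it is \emph{not} the case that $\density{X, B_i \setminus X} \le \density{Y, B_i}$, hence $\density{X, B_{i-1}} = \density{X, B_i \setminus X} > \density{Y, B_i}$. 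Finally I rewrite both sides in the notation of the statement via the general convention $\density{A, B} = \density{A \setminus B, B}$: namely $\density{B_i, B_{i-1}} = \density{B_i \setminus B_{i-1}, B_{i-1}} = \density{X, B_{i-1}}$ and $\density{B_{i+1}, B_i} = \density{Y, B_i}$, so the chain of (in)equalities above is precisely $\density{B_i, B_{i-1}} > \density{B_{i+1}, B_i}$, as claimed.

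\textbf{Main obstacle.} I expect no genuine obstacle here; the argument is bookkeeping with the definitions, and the one thing to get right is the \emph{direction} of the reasoning. A tempting but unnecessary route is to try to compare $B_{i-1}$ with $B_{i+1}$ directly, or to lean on the fact that the three sets are consecutive in the chain. The correct move is to pivot on the middle set $B_i$ and invoke \emph{its} local density alone, with $X$ sitting inside it and $Y$ sitting outside it; in fact the same argument yields $\density{B', B} > \density{B'', B'}$ whenever $B \subsetneq B' \subsetneq B''$ and $B'$ is locally dense, with no hypothesis on $B$, $B''$, or on consecutiveness. The only remaining care is at the extreme indices $i = 1$ and $i = k-1$, where $B_{i-1} = \emptyset$ or $B_{i+1} = V$; one should note that even then $X$ and $Y$ remain nonempty, so no density expression degenerates.
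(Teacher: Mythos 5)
Your proposal is correct and follows the route the paper intends (the corollary is stated without proof as an immediate consequence of the nesting proposition and the definition): the total order comes from pairwise comparability, the bound $k\le\abs{V}$ from strictly increasing cardinalities, and the strict inequality from applying the local-density condition of the middle set $B_i$ to $X=B_i\setminus B_{i-1}$ and $Y=B_{i+1}\setminus B_i$, using $B_i\setminus X=B_{i-1}$. Your added checks that $\emptyset$ and $V$ are (vacuously) locally dense and that $X,Y$ are nonempty are sensible bookkeeping that the paper leaves implicit.
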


The chain of locally-dense subgraphs of a graph $G$, 
as specified by  Corollary~\ref{corollary:chain}, 
defines the {\em locally-dense decomposition} of~$G$.

\begin{example}
The locally-dense composition of $G_1$ given in Figure~\ref{fig:toy}
is $\emptyset \subsetneq C_2 \subsetneq C_3 = V$,
This is the $k$-core decomposition without $C_1$.
The locally-dense composition of $G_2$ given in Figure~\ref{fig:toy}
is $\emptyset \subsetneq B_1 \subsetneq V$. Note that both $C_2$ and $B_1$
are the densest subgraphs in their respective graphs.
\end{example}

We proceed to characterize the locally-dense subgraphs of the
decomposition with respect to their {\em global} density in the 
whole graph $G$. 
We want to characterize the global density of subgraph $B_i$ of the
decomposition. 
$B_i$ cannot be denser than the previous subgraph $B_{i-1}$ in the
decomposition, however, we want to measure the density that the
additional vertices $S_i=B_i\setminus B_{i-1}$ bring.
This density involves edges among vertices of $S_i$ and edges from
$S_i$ to the previous subgraph $B_{i-1}$.
This is captured precisely by the concept of {\em outer density}
$\density{B_i,B_{i-1}}$ defined in the previous section.
As the following proposition shows
the outer density of $B_i$ with respect to $B_{i-1}$ is maximized over
all subgraphs that contain $B_{i-1}$.
In other words, 
$B_i$ is the densest subgraph we can choose after $B_{i-1}$, 
given the containment constraint.

\begin{proposition}
\label{prop:maximal}
Let $\set{B_i}$ be the chain of locally-dense subgraphs.
Then $B_0 = \emptyset$, $B_k = V$, and $B_{i}$ is the densest subgraph properly containing $B_{i - 1}$,
\[
	B_{i} = \arg \max_{W \supsetneq B_{i - 1}} \density{W, B_{i - 1}}\quad.
\]
\end{proposition}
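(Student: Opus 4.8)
The plan is to reduce the statement to a single ``internal'' comparison and then settle that by contradiction. Three elementary tools are used throughout. First, the \emph{mediant identity}: for pairwise disjoint $X,Y,Z$ one has $\abs{\marginaledges{X\cup Y,Z}}=\abs{\marginaledges{X,Z}}+\abs{\marginaledges{Y,X\cup Z}}$, so $\density{X\cup Y,Z}$ is a weighted average of $\density{X,Z}$ and $\density{Y,X\cup Z}$ and hence lies between them. Second, \emph{base monotonicity}: $\density{X,Z}\le\density{X,Z'}$ whenever $Z\subseteq Z'$ and $X\cap Z'=\emptyset$, since $\marginaledges{X,Z}\subseteq\marginaledges{X,Z'}$. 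Third, $\abs{E(\cdot)}$ is supermodular, so the maximizers of $W\mapsto\density{W,C}$ over any interval $[C,D]=\set{W : C\subseteq W\subseteq D}$ are closed under union (and intersection); in particular such a family has an inclusion-largest member. The boundary claims are immediate: $\emptyset$ has no non-empty subset and $V$ has no non-empty set disjoint from it, so both are vacuously locally dense, hence the bottom and top of the chain, giving $B_0=\emptyset$ and $B_k=V$.

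Fix $i$ and abbreviate $C=B_{i-1}$, $S=B_i\setminus C$, $\alpha=\density{B_i,C}$. I first claim it suffices to prove $\density{W,C}\le\alpha$ for every $W$ with $C\subsetneq W\subseteq B_i$. Take an arbitrary $W\supsetneq C$ and split $W\setminus C=(W\cap S)\cup(W\setminus B_i)$. If $W\cap S=\emptyset$ then $\density{W,C}=\density{W\setminus B_i,C}\le\density{W\setminus B_i,B_i}<\density{S,C}=\alpha$, the last step being local density of $B_i$ with internal set $S$ and external set $W\setminus B_i$. Otherwise the mediant identity puts $\density{W,C}$ between $\density{W\cap S,C}$ and $\density{W\setminus B_i,C\cup(W\cap S)}$, and the latter is $\le\density{W\setminus B_i,B_i}<\alpha$ by base monotonicity and local density of $B_i$; so if $\density{W,C}>\alpha$ then $\density{W\cap S,C}>\alpha$, i.e.\ $C\cup(W\cap S)\in[C,B_i]$ already witnesses the internal statement. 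The same bookkeeping shows that, once the internal statement is known, every maximizer of $\density{\cdot,C}$ over $W\supsetneq C$ is contained in $B_i$; since $B_i$ is itself a maximizer and maximizers are union-closed, $B_i$ is then the inclusion-largest maximizer, which is exactly what $B_i=\arg\max_{W\supsetneq C}\density{W,C}$ asserts.

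It remains to prove the internal statement. Suppose not; then $\density{\cdot,C}$ on $[C,B_i]$ attains some value $\gamma>\alpha$, and by supermodularity there is an inclusion-largest maximizer $W^\star$ for this restricted problem, with $C\subsetneq W^\star\subsetneq B_i$ (proper since $\density{B_i,C}=\alpha<\gamma$). I show $W^\star$ is locally dense, contradicting that the chain $\set{B_j}$ contains \emph{every} locally-dense set, hence none strictly between $B_{i-1}$ and $B_i$. Take non-empty $X\subseteq W^\star$ and non-empty $Y$ with $Y\cap W^\star=\emptyset$; I must show $\density{X,W^\star\setminus X}>\density{Y,W^\star}$. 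For the right side, split $Y=Y_{\mathrm{in}}\cup Y_{\mathrm{out}}$ with $Y_{\mathrm{in}}=Y\cap B_i$, $Y_{\mathrm{out}}=Y\setminus B_i$. If $Y_{\mathrm{in}}\neq\emptyset$ then $W^\star\cup Y_{\mathrm{in}}\in[C,B_i]$, so $\density{W^\star\cup Y_{\mathrm{in}},C}\le\gamma$; writing it via the mediant identity as an average of $\gamma=\density{W^\star,C}$ and $\density{Y_{\mathrm{in}},W^\star}$ forces $\density{Y_{\mathrm{in}},W^\star}\le\gamma$, and equality is impossible since it would make $W^\star\cup Y_{\mathrm{in}}$ a maximizer strictly larger than $W^\star$. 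If $Y_{\mathrm{out}}\neq\emptyset$ then $\density{Y_{\mathrm{out}},W^\star\cup Y_{\mathrm{in}}}\le\density{Y_{\mathrm{out}},B_i}<\alpha<\gamma$ by base monotonicity and local density of $B_i$. An average of quantities all strictly below $\gamma$ is strictly below $\gamma$, so $\density{Y,W^\star}<\gamma$. For the left side I show $\density{X,W^\star\setminus X}\ge\gamma$ via the potential $\phi(T)=\abs{E(T)}-\gamma\abs{T}$, noting $\phi(W^\star)=\phi(C)$ since $\density{W^\star,C}=\gamma$. Writing $X=X_C\cup X_S$ with $X_C=X\cap C$: removing $X_S$ leaves a set containing $C$ inside $[C,B_i]$, so $\density{W^\star\setminus X_S,C}\le\gamma$, i.e.\ $\phi(W^\star\setminus X_S)\le\phi(C)=\phi(W^\star)$; if $X_C\neq\emptyset$, removing it next only lowers $\phi$ further, because $\density{X_C,(W^\star\setminus X_S)\setminus X_C}\ge\density{X_C,C\setminus X_C}>\density{W^\star\setminus C,C}=\gamma$ by base monotonicity and local density of $C$. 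Hence $\phi(W^\star\setminus X)\le\phi(W^\star)$, which rearranges to $\abs{\marginaledges{X,W^\star\setminus X}}\ge\gamma\abs{X}$. Combining, $\density{X,W^\star\setminus X}\ge\gamma>\density{Y,W^\star}$, so $W^\star$ is locally dense — the desired contradiction.

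The routine parts are the two boundary equalities and the mediant/supermodularity bookkeeping. The real obstacle is the internal statement, and within it the verification that the candidate $W^\star$ is locally dense: this is the one place where both local-density hypotheses must be invoked at once — that of $B_{i-1}$ to bound the internal outer-densities $\density{X,W^\star\setminus X}$ from below, and that of $B_i$ (together with the restricted maximality of $W^\star$) to bound the external outer-densities $\density{Y,W^\star}$ from above — and one must split an arbitrary external set $Y$ into its portion inside $B_i$ and its portion outside, treating the two pieces with different arguments.
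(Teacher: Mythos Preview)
Your argument is correct, and it takes a genuinely different route from the paper's. The paper proceeds by induction on $i$: it lets $U=\arg\max_{W\supsetneq B_{i-1}}\density{W,B_{i-1}}$, shows directly that $U$ is locally dense (splitting an arbitrary $X\subseteq U$ into slices $X\cap(B_j\setminus B_{j-1})$ and invoking the inductive hypothesis on each earlier layer via the add/delete Lemmas~\ref{lem:add}--\ref{lem:delete}), concludes $U=B_j$ for some $j\ge i$, and then rules out $j>i$. You instead work non-inductively and locally, within the single interval $[B_{i-1},B_i]$: you first use local density of $B_i$ to reduce the global maximization to that interval, and then argue by contradiction that if $B_i$ were not optimal there, the largest restricted maximizer $W^\star$ would itself be locally dense---using local density of $B_{i-1}$ for the internal bound and local density of $B_i$ (plus maximality of $W^\star$) for the external bound. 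The paper's approach packages the bookkeeping into two reusable lemmas and fits naturally with the later development (Proposition~\ref{prop:compact}); your approach is more self-contained, avoids induction entirely, and makes transparent that exactly the local-density hypotheses on the two neighbouring chain elements are what pin down $B_i$.
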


To prove the proposition we will use the following three lemmas.

\begin{lemma}
\label{lem:add}
Let $X \subseteq Y$ be two sets of vertices with $Y \neq \emptyset$. Assume a third non-empty set $Z$ with $Z \cap Y = \emptyset$.
Then one of the following three cases follows:
\begin{itemize}
\item $\density{Z, Y}  >  \density{Y \cup Z, X}  >  \density{Y, X}$, or
\item $\density{Z, Y}  <  \density{Y \cup Z, X}  <  \density{Y, X}$, or
\item $\density{Z, Y}  =  \density{Y \cup Z, X}  =  \density{Y, X}$.
\end{itemize}
\end{lemma}

\begin{proof}
Write $\alpha = \frac{\abs{Y}}{\abs{Y} + \abs{Z}}$.
We can rewrite $\density{Y \cup Z, X}$ as
\[
	\density{Y \cup Z, X}
	= \frac{\abs{\marginaledges{Y \cup Z, X}}}{\abs{Y} + \abs{Z}}
	= \frac{\abs{\marginaledges{Y, X}} + \abs{\marginaledges{Z, Y}}}{\abs{Y} + \abs{Z}}
	= \alpha\density{Y, X} + (1 - \alpha)\density{Z, Y}\quad.
\]
This shows that
either
$\density{Z, Y}  \geq  \density{Y \cup Z, X}  \geq  \density{Y, X}$
or
$\density{Z, Y}  \leq  \density{Y \cup Z, X}  \leq  \density{Y, X}$.
Since $0 < \alpha < 1$ it follows that $\density{Z, Y}  =  \density{Y \cup Z, X}$ if and only if $\density{Y \cup Z, X}  =  \density{Y, X}$.
The three cases follows.
\end{proof}

Let $C_i$ be the sequence defined as $C_i = \arg \max_{W \supsetneq C_{i - 1}} \density{W}$, in case of a tie, select a larger graph, and $C_0 = \emptyset$.

\begin{lemma}
\label{lem:maxmonotone}
$\density{C_j, C_{j - 1}} > \density{C_i, C_{i - 1}}$ for $j < i$.
\end{lemma}

\begin{proof}
We only need to show that the lemma holds $j = i - 1$.
Assume otherwise: $\density{C_i, C_{i - 1}} \geq \density{C_{i - 1}, C_{i - 2}}$.

Write $Z = C_i \setminus C_{i - 1}$, $Y = C_{i - 1}$, and $X = C_{i - 2}$.
Since $\density{Z, Y} = \density{C_i, C_{i - 1}} $,
Lemma~\ref{lem:add} implies that
\[
	\density{C_i, C_{i - 2}} =  \density{Y \cup Z, X} \geq  \density{Y, X} = \density{C_{i - 1}, C_{i - 2}},
\]
violating the optimality of $C_{i - 1}$.
\end{proof}

\begin{lemma}
\label{lem:maxdel}
If $Z \subseteq C_j \setminus C_{j - 1}$ and $Z \neq \emptyset$, then $\density{Z, C_j \setminus Z} \geq \density{C_j, C_{j - 1}}$.
\end{lemma}

\begin{proof}
Assume otherwise: $\density{Z, C_j \setminus Z} < \density{C_j, C_{j - 1}}$.
Write $X = C_{j - 1}$, $Y = C_j \setminus Z$. Lemma~\ref{lem:add} implies that
\[
	\density{C_j, C_{j - 1}} = \density{Z \cup Y, X} < \density{Y, X} = \density{C_j \setminus Z, C_{j - 1}},
\]
violating the optimality of $C_j$.
\end{proof}

\begin{proof}[Proof of Proposition~\ref{prop:maximal}]
We need to show that $C_i = B_i$.
Fix $i$ and assume inductively that $C_j = B_j$ for all $j < i$.

We will first show that $C_i$ is locally dense: we argue that there are no sets
$X$ and $Y$ with $X \subseteq C_i$ and $Y \cap C_i = \emptyset$ that can serve
as certificates for $C_i$ being non locally-dense.

Fix any $X \subseteq C_i$. 
Define $X_j = X \cap (C_j \setminus C_{j - 1})$ and
$U_j = (C_i \setminus X) \cup C_{j - 1}$ for $j \leq i$.

We claim that $C_j \subseteq U_j \cup X_j$.
Let $x \in C_j$. If $x \in C_{j - 1}$, then $x \in U_j$.
Assume that $x \in C_j \setminus C_{j - 1}$.
If $x \in X$, then $x \in X_j$.
If $x \notin X$, then $x \in C_j \setminus X \subseteq C_i \setminus X \subseteq U_j$.
Thus, $C_j \subseteq U_j \cup X_j$, which in turns implies
that $C_j \setminus X_j \subseteq U_j \setminus X_j$.

This leads to
\begin{flalign*}
	&& \density{X_j, U_j \setminus X_j} 
	& \geq \density{X_j, C_j \setminus X_j} & (C_j \setminus X_j \subseteq U_j \setminus X_j)\\
	&& &\geq \density{C_j, C_{j - 1}}  & \text{(Lemma~\ref{lem:maxdel})}\\
	&& & > \density{C_i, C_{i - 1}}\quad.  & \text{(Lemma~\ref{lem:maxmonotone})}\\
\end{flalign*}

This inequality leads to
\begin{eqnarray*}
\density{X, C_i \setminus X} & = & \sum_{j = 1, X_j \neq \emptyset}^i \frac{\abs{X_j}}{\abs{X}} \density{X_j, U_j \setminus X_j} \\
& \geq & \sum_{j = 1}^i \frac{\abs{X_j}}{\abs{X}} \density{C_i, C_{i - 1}} \\
&  = & \density{C_i, C_{i - 1}}\quad.
\end{eqnarray*}

Consider also any set $Y$ with $Y \cap C_i = \emptyset$. 
Due to the optimality of $C_i$ and Lemma~\ref{lem:add} we must have $\density{Y, C_i} < \density{C_i, C_{i - 1}}$.

We conclude that for any $X$ and $Y$ with 
$X\subseteq U$ and $Y\cap C_i=\emptyset$ 
it is $\density{X, C_i \setminus X}>\density{Y, C_i}$, 
which shows that $C_i$ is locally dense.

Now, we can safely assume $C_i = B_j$ for some $j$.  
We need to show that $j = i$.
By induction we know that $C_{i - 1} = B_{i - 1}$.
This guarantees that $j \geq i$.
Assume $j > i$. 
Since $C_i$ is maximal, we have $\density{B_j \setminus B_i, B_i} < \density{B_i, B_{i-1}}$.

Since $B_i$ is locally-dense, we have $\density{B_j \setminus B_i, B_i} < \density{B_i, B_{i-1}}$.
Lemma~\ref{lem:add} now implies that
\[
	\density{B_j, B_{i - 1}} < \density{B_i, B_{i - 1}}
\]
which contradicts the optimality of $C_i = B_j$. Thus $i = j$.
\end{proof}

As a consequence of the previous proposition we can characterize the first
subgraph in the decomposition.

\begin{corollary}
Let $\set{B_i}$ be a locally-dense decomposition of a graph $G$.
Then $B_1$ is the densest subgraph of $G$.
\end{corollary}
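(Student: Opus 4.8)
The plan is to derive this directly from Proposition~\ref{prop:maximal}, which already establishes the non-trivial facts. Proposition~\ref{prop:maximal} states that $B_0 = \emptyset$ and that each $B_i$ maximizes the outer density over all strict supersets of $B_{i-1}$; instantiating it at $i = 1$ gives
\[
B_1 = \arg\max_{W \supsetneq \emptyset} \density{W, \emptyset}.
\]
So the whole content of the corollary reduces to checking that, for a non-empty vertex set $W$, the outer density $\density{W, \emptyset}$ coincides with the ordinary density $\density{W}$.

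This is a matter of unfolding definitions. First I would use $\density{W, \emptyset} = \density{W \setminus \emptyset, \emptyset} = \density{W, \emptyset}$ (the general-case clause is vacuous here since $W \cap \emptyset = \emptyset$), and then note that $\marginaledges{W, \emptyset} = \edges{W} \cup \crossedges{W, \emptyset}$. Since there are no cross edges into the empty set, $\crossedges{W, \emptyset} = \emptyset$, hence $\marginaledges{W, \emptyset} = \edges{W}$ and
\[
\density{W, \emptyset} = \frac{\abs{\marginaledges{W, \emptyset}}}{\abs{W}} = \frac{\abs{\edges{W}}}{\abs{W}} = \density{W}.
\]
Therefore $B_1 = \arg\max_{W \neq \emptyset} \density{W}$, which is by definition the densest subgraph of $G$. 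There is no real obstacle here: the argument is a one-line corollary of Proposition~\ref{prop:maximal} together with the observation that outer density relative to the empty set is just density. The only point requiring the tiniest bit of care is making sure the quantifier ``$W \supsetneq \emptyset$'' is read as ``$W$ non-empty,'' matching the standard convention that the densest subgraph is taken over non-empty vertex sets.
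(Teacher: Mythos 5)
Your proof is correct and matches the paper's intent: the paper states this corollary without proof as an immediate consequence of Proposition~\ref{prop:maximal}, and your argument (instantiating at $i=1$ and checking that $\density{W,\emptyset}=\density{W}$ because $\crossedges{W,\emptyset}=\emptyset$) is exactly the unfolding the authors leave implicit.
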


The above discussion motivates the problem of 
locally-dense graph decomposition, 
which is the focus of this paper.

\begin{problem}
\label{problem:LDGD}
Given a graph $G=(V,E)$
find a maximal sequence of locally-dense subgraphs
\[
\emptyset = B_0 \subsetneq B_1 \subsetneq \cdots \subsetneq B_k=V\quad.
\] 
\end{problem}

\section{Decomposition algorithms}
\label{sec:discovery}

In this section we propose two algorithms for the problem of
locally-dense graph decomposition (Problem~\ref{problem:LDGD}).
The first algorithm gives an {\em exact solution}, and runs in
worst-case time $\bigo{n^2m}$,
but it is significantly faster in practice. 
The second algorithm is a linear-time algorithm that provides a
factor-$2$ approximation guarantee. 

Both algorithms are inspired by corresponding algorithms for the
densest-subgraph problem. 
The first algorithm by the exact algorithm of~\citet{Goldberg:1984up},
 and the second algorithm by the greedy linear-time algorithm of~\citet{Charikar:2000tg}. 

\subsection{Exact algorithm}

We start our discussion on the exact algorithm for locally-dense graph
decomposition by reviewing Goldberg's algorithm~\citep{Goldberg:1984up}
for the densest-subgraph problem.

Recall that the densest-subgraph problem asks to find the subset of
vertices $W$ that maximizes
$\density{W} = \abs{E(W)}/\abs{W}$.
Given a graph $G=(V,E)$ and a positive number $\alpha \geq 0$
define a function
\[
	\compact{\alpha} = \max_{W\subseteq V} \set{ \abs{E(W)} - \alpha \abs{W}},
\]
and the maximizer 
\[
	\compactgraph{\alpha} = \arg \max_{W\subseteq V} \set{ \abs{E(W)} - \alpha \abs{W} },
\]
where ties are resolved by picking the largest $W$.
Note that $\compact{}$ 
decreases as $\alpha$ increases,  and as $\alpha$ exceeds a certain
value, $\compact{}$ becomes $0$ by taking $W=\emptyset$. 
Goldberg observed that the densest-subgraph problem is equivalent to
the problem of finding the largest value of $\alpha^*$ for which 
the maximizer set 
$\compactgraph{\alpha^*}=W^*$ is non empty.\footnote{This 
observation is an instance of 
{\em fractional programming}~\citep{dinkelbach67fractional}.}
The densest subgraph is precisely this maximizer set $W^*$.
Furthermore, Goldberg showed how to find the vertex set
$W=\compactgraph{\alpha}$, for a given value of $\alpha$.
This is done by
mapping the problem to an instance of 
the {\em min-cut problem}, 
which can be solved in $\bigo{nm}$ time, 
in a recent breakthrough by~\citet{Orlin:2013wu}.
We will present an extension of this transformation in the next section, where
we discuss how to speed-up the algorithm.

Thus, Goldberg's algorithm uses binary search over $\alpha$ and finds 
the largest value of $\alpha^*$ for which
the maximizer set $W^*$ is non empty.
Each iteration of the binary search involves a call to a
min-cut instance for the current value of $\alpha$.

Our algorithm for finding the locally-dense decomposition of a graph 
builds on Goldberg's algorithm~\citep{Goldberg:1984up}.
We show that Goldberg's construction has the following,
rather remarkable, property:
there is a sequence of values 
$\alpha^* =\alpha_1 > \cdots > \alpha_k$, for $k\le n$, 
which gives all the distinct values of the function $\compact{}$.
Furthermore,  
the corresponding set of subgraphs 
$\{\compactgraph{\alpha_1},\ldots,\compactgraph{\alpha_k}\}$ 
is exactly the set of all locally-dense subgraphs of $G$, 
and thus the solution to our decomposition problem.

Therefore, our algorithm is a simple extension of Goldberg's
algorithm: instead of searching only for the optimal value
$\alpha_1=\alpha^*$, 
we find the whole sequence of $\alpha_i$'s
and the corresponding subgraphs.

Next we prove the claimed
properties and discuss the algorithm in more detail.

We first show that the distinct maximizers of the
function~$\compactgraph{}$ correspond to the set of locally-dense
subgraphs.

\begin{proposition}
\label{prop:compact}
Let $\set{B_i}$ be the set of locally-dense subgraphs. 
Then
\[
	B_i = \compactgraph{\alpha}, \quad \text{for} \quad \density{B_{i + 1}, B_{i}} < \alpha \leq \density{B_i, B_{i - 1}}\quad.
\]
\end{proposition}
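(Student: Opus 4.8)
The plan is to fix an index $i$ (with the conventions $\beta_0=+\infty$ and $\beta_{k+1}=-\infty$ at the two endpoints, writing $\beta_j=\density{B_j,B_{j-1}}$) and a value $\alpha$ with $\beta_{i+1}<\alpha\le\beta_i$, and then to show two things about $h(W):=\abs{E(W)}-\alpha\abs{W}$: that $B_i$ is a maximizer of $h$, and that every maximizer of $h$ is contained in $B_i$. Since ties in the definition of $\compactgraph{}$ are broken toward the largest set, this yields $\compactgraph{\alpha}=B_i$. By Corollary~\ref{corollary:chain} the values $\beta_1>\beta_2>\cdots>\beta_k$ are strictly decreasing, which is exactly what makes the interval attached to $B_i$ nonempty and disjoint from the others.

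The engine is a decomposition of $h$ along the shells $S_j=B_j\setminus B_{j-1}$. For an arbitrary $W\subseteq V$ put $W_j=W\cap S_j$; grouping the edges of $E(W)$ by the larger of the two shell-indices of their endpoints gives $\abs{E(W)}=\sum_{j=1}^k\abs{\marginaledges{W_j,W\cap B_{j-1}}}$, and since the $W_j$ partition $W$,
\[
h(W)=\sum_{j\,:\,W_j\ne\emptyset}\abs{W_j}\bigl(\density{W_j,W\cap B_{j-1}}-\alpha\bigr).
\]
The one nontrivial estimate I would establish is that each shell density is capped by $\beta_j$: for $W_j\ne\emptyset$,
\[
\density{W_j,W\cap B_{j-1}}\le\density{W_j,B_{j-1}}=\density{W_j\cup B_{j-1},B_{j-1}}\le\beta_j,
\]
where the first step only enlarges the pool of cross edges, the middle equality is the definition of outer density, and the last step is Proposition~\ref{prop:maximal} applied to the proper superset $W_j\cup B_{j-1}$ of $B_{j-1}$.

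Given these facts, the rest is a term-by-term comparison against the decomposition of $B_i$ itself, namely $h(B_i)=\sum_{j=1}^i\abs{S_j}(\beta_j-\alpha)$. For $j\ge i+1$ we have $\density{W_j,W\cap B_{j-1}}\le\beta_j\le\beta_{i+1}<\alpha$, so that term of $h(W)$ is strictly negative whenever $W_j\ne\emptyset$ and absent otherwise, while $B_i$ contributes nothing at those indices. For $j\le i$ we have $\beta_j\ge\beta_i\ge\alpha$, and the $j$-th term of $h(W)$ is at most $\abs{W_j}(\beta_j-\alpha)\le\abs{S_j}(\beta_j-\alpha)$ when the shell density is $\ge\alpha$ (using $\abs{W_j}\le\abs{S_j}$ and $\beta_j-\alpha\ge0$), and is nonpositive hence still $\le\abs{S_j}(\beta_j-\alpha)$ otherwise. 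Summing gives $h(W)\le h(B_i)$, so $B_i$ is a maximizer; and equality forces $W_j=\emptyset$ for all $j\ge i+1$, i.e.\ $W\subseteq B_i$, so every maximizer lies inside $B_i$ and $\compactgraph{\alpha}=B_i$. The two extreme cases drop out of the same computation: $\compactgraph{\alpha}=\emptyset$ once $\alpha>\beta_1$, and $\compactgraph{\alpha}=V$ for $0\le\alpha\le\beta_k$.

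The part I expect to require the most care is the bookkeeping behind the shell decomposition and the density cap — charging each edge to exactly one shell so the sum is an honest partition of $E(W)$, and being scrupulous that the cap uses $W\cap B_{j-1}$ inside the marginal-edge set while the appeal to Proposition~\ref{prop:maximal} legitimately uses the full $B_{j-1}$. After that, the split into the two regimes (indices above shell $i$ contribute negatively; indices at or below shell $i$ are dominated term-wise by $B_i$) is routine, and it is precisely there that the choice of the interval $(\beta_{i+1},\beta_i]$ and the strict monotonicity $\beta_1>\cdots>\beta_k$ are consumed.
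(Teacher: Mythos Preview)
Your argument is correct, but it takes a genuinely different route from the paper's. The paper first proves a structural lemma: for \emph{any} $\beta$, the maximizer $\compactgraph{\beta}$ is locally dense (by observing that deleting any $X$ from it cannot help and adding any $Y$ strictly hurts, which forces $\density{X,U\setminus X}\ge\beta>\density{Y,U}$). Having established that $\compactgraph{\alpha}$ is some $B_j$, it then pins down $j=i$ by two one-step moves using Corollary~\ref{corollary:chain}: if $j>i$ then $\density{B_j,B_{j-1}}<\alpha$ so peeling off $B_j\setminus B_{j-1}$ improves the objective, and if $j<i$ then $\density{B_{j+1},B_j}\ge\alpha$ so appending $B_{j+1}\setminus B_j$ does at least as well on a strictly larger set, violating the tie-breaking rule.

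Your approach is instead a direct variational comparison: you decompose $h(W)=\abs{E(W)}-\alpha\abs{W}$ along the shells, cap each shell contribution using Proposition~\ref{prop:maximal}, and dominate $h(W)$ term-by-term by $h(B_i)=\sum_{j\le i}\abs{S_j}(\beta_j-\alpha)$. This is more quantitative and yields an explicit formula for the optimum; it also makes the tie-breaking step completely transparent (equality forces $W\subseteq B_i$). On the other hand, it leans on Proposition~\ref{prop:maximal}, a heavier earlier result, whereas the paper needs only the strict monotonicity of Corollary~\ref{corollary:chain}; and the paper's route produces as a free byproduct the independently useful fact that $\compactgraph{\beta}$ is locally dense for every $\beta$. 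One minor simplification in your write-up: for $j\le i$ the two-case split on whether $\density{W_j,W\cap B_{j-1}}\ge\alpha$ is unnecessary, since $\density{W_j,W\cap B_{j-1}}\le\beta_j$ and $\beta_j-\alpha\ge 0$ already give $\abs{W_j}(\density{W_j,W\cap B_{j-1}}-\alpha)\le\abs{W_j}(\beta_j-\alpha)\le\abs{S_j}(\beta_j-\alpha)$ in one stroke.
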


\begin{proof}
We first show that $U = \compactgraph{\beta}$ is a locally-dense
subgraph, for any $\beta$.
Note that for any $X \subseteq U$, we must have
$\abs{\marginaledges{X, U \setminus X}} - \beta\abs{X} \geq 0$, otherwise we can delete
$X$ from $U$ and obtain a better solution which violates the optimality of $U = \compactgraph{\beta}$.
This implies that $\density{X, U \setminus X} = \marginaledges{X, U \setminus X} / \abs{X} \geq \beta$.
Similarly, for any $Y$ such that $Y \cap U = \emptyset$, we have 
$\abs{\marginaledges{Y, U}} - \beta \abs{Y} < 0$ or, equivalently, 
$\density{Y, U} < \beta$.
Thus, $U$ is locally-dense.

Fix $i$ and select $\alpha$ such that $\density{B_{i + 1}, B_{i}} < \alpha \leq \density{B_i, B_{i - 1}}$.
Let $B_j = \compactgraph{\alpha}$. If $j > i$, then, due to Corollary~\ref{corollary:chain}, $\density{B_j, B_{j - 1}} \leq \density{B_{i + 1}, B_{i}} < \alpha$
which we can rephrase as 
\[
c = \abs{\marginaledges{B_j \setminus B_{j - 1}, B_{j - 1}}} - \alpha|B_j \setminus B_{j - 1}| < 0\quad.
\]
If we delete $B_j \setminus B_{j - 1}$ from $U$, then we improve the quality exactly by $-c$, that is,
we obtain a better solution which violates the optimality of $U$.  If $j < i$,
then Corollary~\ref{corollary:chain} implies that $\density{B_{j + 1}, B_{j}} \geq \alpha$, so 
we can add $B_{j + 1} \setminus B_j$ to obtain a better solution.
It follows that $B_i = \compactgraph{\alpha}$.
\end{proof}

\begin{algorithm}[t]
\caption{\label{algorithm:exact}$\decompose(G,X,Y)$}
\Input{Graph $G = (V, E)$ \\ locally-dense subgraphs $X$ and $Y$ with $X \subsetneq Y$}

$\alpha \define \density{Y, X} + n^{-2}$\;

$Z \define \compactgraph{\alpha}$\;

\If {$Z \neq X$} {
	\Out $Z$\;
	$\decompose(G,X, Z)$\;
	$\decompose(G,Z, Y)$\;
}
\end{algorithm}

Next we need to show that it is possible to search efficiently for the
sequence of $\alpha$'s that give the set of  locally-dense
subgraphs.  
To that end we will show that if we have obtained two subgraphs 
$B_x \subsetneq B_y$ of the decomposition
(corresponding to values  $\alpha_x > \alpha_y$), 
it is possible to pick a new value $\alpha$ so that computing 
$\compactgraph{\alpha}$ allows us to make progress in
the search process: 
we either find a new locally-dense subgraph 
$B_x \subsetneq B_z \subsetneq B_y$ or we establish that no such
subgraph exists between $B_x$ and $B_y$, in other words, 
$B_x$ and $B_y$ are consecutive subgraphs in our decomposition.

\begin{proposition}
\label{prop:rich}
Let $\set{B_i}$ be the set of locally-dense subgraphs. 
Let $B_x \subsetneq B_y$ be two subgraphs.
Set $\alpha = \density{B_y, B_x} + n^{-2}$ and let 
$B_z = \compactgraph{\alpha}$.
If $x + 1 < y$, then $x < z < y$.
If $x + 1 = y$, then $z = x$.
\end{proposition}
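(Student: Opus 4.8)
The plan is to reduce everything to Proposition~\ref{prop:compact}. Write $\beta_i = \density{B_i, B_{i - 1}}$ for $1 \le i \le k$; by Corollary~\ref{corollary:chain} we have $\beta_1 > \beta_2 > \cdots > \beta_k$, and Proposition~\ref{prop:compact} says that $\compactgraph{\alpha} = B_j$ exactly when $\beta_{j + 1} < \alpha \le \beta_j$ (with the conventions $\beta_0 = \infty$ and $\beta_{k + 1} = 0$). Since $B_z = \compactgraph{\alpha}$ for $\alpha = \density{B_y, B_x} + \abs{V}^{-2}$, it then suffices to locate this $\alpha$ among the thresholds $\beta_i$.

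The first step is to show that $\density{B_y, B_x}$ is a convex combination, with strictly positive coefficients, of $\beta_{x + 1}, \ldots, \beta_y$. Partitioning $B_y \setminus B_x = S_{x + 1} \cup \cdots \cup S_y$ with $S_i = B_i \setminus B_{i - 1}$ (a disjoint union), one checks that the edges counted in $\marginaledges{B_y \setminus B_x, B_x}$ split according to the largest shell index among their two endpoints, which gives $\abs{\marginaledges{B_y \setminus B_x, B_x}} = \sum_{i = x + 1}^y \abs{\marginaledges{S_i, B_{i - 1}}} = \sum_{i = x + 1}^y \beta_i \abs{S_i}$, while $\abs{B_y \setminus B_x} = \sum_{i = x + 1}^y \abs{S_i}$. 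Dividing, $\density{B_y, B_x}$ is the claimed weighted average. Since $\beta_{x + 1} > \cdots > \beta_y$, it lies strictly between $\beta_y$ and $\beta_{x + 1}$ when $y > x + 1$, and equals $\beta_{x + 1} = \beta_y$ when $y = x + 1$.

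The second ingredient is a granularity bound: each of $\beta_x$ (for $x \ge 1$), $\beta_{x + 1}$, and $\density{B_y, B_x}$ is a fraction with integer numerator (a count of edges) and denominator at most $\abs{V}$, so any two distinct such numbers differ by at least $\abs{V}^{-2}$. If $y = x + 1$, then $\alpha = \beta_{x + 1} + \abs{V}^{-2}$, so $\alpha > \beta_{x + 1}$ and also $\alpha \le \beta_x$ (because $\beta_x - \beta_{x + 1} \ge \abs{V}^{-2}$, the upper constraint being vacuous when $x = 0$); by Proposition~\ref{prop:compact} this forces $\compactgraph{\alpha} = B_x$, i.e.\ $z = x$. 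If $y > x + 1$, then $\beta_y \le \density{B_y, B_x} < \beta_{x + 1}$: the lower bound gives $\alpha \ge \beta_y + \abs{V}^{-2} > \beta_y$, which rules out $z \ge y$, and the strict upper bound together with the granularity estimate gives $\beta_{x + 1} - \density{B_y, B_x} \ge \abs{V}^{-2}$, hence $\alpha \le \beta_{x + 1}$, which rules out $z \le x$; together these yield $x < z < y$.

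I expect the main obstacle to be the edge bookkeeping in the first step: verifying that attributing each edge of $\marginaledges{B_y \setminus B_x, B_x}$ to the shell $S_i$ with $i$ the larger of its two endpoint-indices is a genuine partition (nothing missed, nothing double-counted), so that the weighted-average identity holds exactly; the ensuing case analysis and the $\abs{V}^{-2}$ estimate are then routine.
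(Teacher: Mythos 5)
Your proposal is correct and follows essentially the same route as the paper: reduce to Proposition~\ref{prop:compact}, show that $\density{B_y,B_x}$ lies in $[\beta_y,\beta_{x+1})$, and use the $\abs{V}^{-2}$ granularity of rationals with denominators bounded by $\abs{V}$ to place $\alpha$ in the correct interval. The only difference is presentational: you make the weighted-average identity $\density{B_y,B_x}=\sum_i \beta_i\abs{S_i}/\sum_i\abs{S_i}$ explicit, whereas the paper asserts the resulting sandwich inequalities as ``simple calculations.''
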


\begin{lemma}
\label{lem:chainmonotone}
$\density{B_k, B_i} \geq \density{B_\ell, B_j}$,
for $i \leq j < k \leq \ell$. The equality holds if and only if $i = k$ and $j = \ell$.
\end{lemma}

\begin{proof}
Corollary~\ref{corollary:chain} states that $\density{B_o, B_{o - 1}}$ is monotonically strictly decreasing as a function of $o$.
Lemma~\ref{lem:add}, applied recusively, states that
\[
	\density{B_k, B_i} \geq \density{B_k, B_j} \geq \density{B_\ell, B_j}\quad.
\]
The inequality is strict if and only if $i \neq k$ or $j \neq \ell$. 
\end{proof}

\begin{proof}[Proof of Proposition~\ref{prop:rich}]
Lemma~\ref{lem:chainmonotone} states that $\density{B_y, B_{y - 1}} \leq \density{B_y, B_x} < \alpha$.
Proposition~\ref{prop:compact} now implies that $z < y$.

Assume that $x + 1 < y$. Lemma~\ref{lem:chainmonotone} implies that
$\density{B_y, B_{x}} < \density{B_{x + 1}, B_{x}}$. Write
\begin{align*}
	a & = \abs{\marginaledges{B_y \setminus B_x, B_x}}, & b & = \abs{B_y} - \abs{B_x}, \\
	c & = \abs{\marginaledges{B_{x + 1} \setminus B_x, B_x}}, \quad\text{ and } & d & = \abs{B_{x + 1}} - \abs{B_x}\quad. \\
\end{align*}
Let us now bound the difference between the densities as
\[
	\density{B_{x + 1}, B_{x}} - \density{B_y, B_{x}}
		= \frac{a}{b} - \frac{c}{d}  
		=  \frac{ad - bc}{bd} 
		\geq  \frac{1}{bd}  
		>  \frac{1}{n^2} = \alpha - \density{B_y, B_{x}} \quad.
\]
This implies that $\alpha \leq \density{B_{x + 1}, B_{x}}$.
Proposition~\ref{prop:compact} now implies that $z \geq x + 1 > x$.

Assume that $x + 1 = y$. Lemma~\ref{lem:chainmonotone} implies that $\density{B_y, B_{y - 1}} < \density{B_{x}, B_{x - 1}}$, and
the same argument as above shows that $\alpha \leq \density{B_{}, B_{x - 1}}$ and, consequently, $z \geq x$. This guarantees that $x = z$.
\end{proof}

The exact decomposition algorithm uses Proposition~\ref{prop:rich} to
guide the search process.
Starting by the two extreme subgraphs of the decomposition, 
$\emptyset$ and $V$, the algorithm maintains a sequence of
locally-dense subgraphs.
Recursively, for any two currently-adjacent subgraphs in the sequence, 
we use Proposition~\ref{prop:rich} to check whether the two subgraphs
are consecutive or not in the decomposition. 
If they are consecutive, the recurrence at that branch of the search is
terminated.
If they are not, a new subgraph between the two is discovered and it
is added in the decomposition. 
The algorithm is named \decompose\  and it is illustrated as
Algorithm~\ref{algorithm:exact}.

With the next propositions we prove the correctness of the algorithm
and we bound its running time.

\begin{proposition}
The algorithm \decompose\ initiated with input 
$(G, \emptyset, V)$
visits all non-trivial locally-dense subgraphs of $G$.
\end{proposition}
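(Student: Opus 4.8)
The plan is to prove this by induction on the structure of the recursion tree of \decompose, showing that every consecutive pair of locally-dense subgraphs $(B_{i-1}, B_i)$ is eventually passed as an argument pair $(X, Y)$ to some recursive call. First I would establish the loop invariant that every call $\decompose(G, X, Y)$ is made with $X \subsetneq Y$ both locally dense and, moreover, consecutive in the sense that there is no locally-dense subgraph strictly between them in the chain \emph{except possibly} the ones that will be discovered deeper in the recursion; more precisely, I would maintain that $X = B_x$ and $Y = B_y$ for indices $x < y$ in the canonical chain of Corollary~\ref{corollary:chain}. The base case is the initial call $(G, \emptyset, V)$, which is valid since $\emptyset = B_0$ and $V = B_k$ by Proposition~\ref{prop:maximal}.

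For the inductive step, consider a call $\decompose(G, B_x, B_y)$. By Proposition~\ref{prop:rich}, setting $\alpha = \density{B_y, B_x} + \abs{V}^{-2}$ and $Z = \compactgraph{\alpha}$, there are exactly two cases. If $x + 1 = y$, then $Z = B_x = X$, the \textbf{if} branch is not taken, and the recursion terminates at a genuinely consecutive pair — nothing more needs to be visited between them. If $x + 1 < y$, then $x < z < y$, so $Z = B_z$ is a \emph{new} locally-dense subgraph strictly between $B_x$ and $B_y$; it is output, and the algorithm recurses on $(B_x, B_z)$ and $(B_z, B_y)$, both of which satisfy the invariant with strictly smaller index gaps. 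Since the index gap $y - x$ strictly decreases along every branch and is bounded below by $1$, the recursion terminates, and an easy induction on the gap shows that every $B_i$ with $x \le i \le y$ is eventually output (or is $B_x$ or $B_y$, which were already present). Applying this to the initial call, every $B_i$ for $0 \le i \le k$ is visited.

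The main obstacle I anticipate is not the recursion-tree bookkeeping but rather making airtight the claim that $\compactgraph{\alpha}$ for the specific $\alpha$ chosen always lands exactly on some chain element $B_z$, and that the index $z$ behaves as Proposition~\ref{prop:rich} asserts. That proposition has already been proved in the excerpt, so I would simply invoke it; the only delicate point is confirming that its hypotheses ($B_x \subsetneq B_y$ both locally dense, hence chain elements) are preserved by the invariant, which follows from Proposition~\ref{prop:compact} telling us that every $\compactgraph{\beta}$ is a locally-dense subgraph and hence equals some $B_j$. One should also note explicitly that distinct invocations cannot produce the same $B_z$ twice in a way that breaks termination — but since each recursive call strictly shrinks the interval $[x, y]$ of chain indices and intervals are nested in a binary tree, this is immediate.

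A compact way to phrase the whole argument: I would prove the lemma ``for any locally-dense $B_x \subsetneq B_y$, the call $\decompose(G, B_x, B_y)$ outputs exactly $\{B_{x+1}, \dots, B_{y-1}\}$'' by strong induction on $y - x$, then specialize to $(x, y) = (0, k)$. This is clean enough that the write-up should be only a few lines, with Proposition~\ref{prop:rich} and Proposition~\ref{prop:compact} doing all the real work.
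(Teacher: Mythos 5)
Your proposal is correct and matches the paper's proof: both argue by strong induction on the index gap $y-x$ in the chain of Corollary~\ref{corollary:chain}, with the base case $y = x+1$ trivial and the inductive step supplied entirely by Proposition~\ref{prop:rich}. The extra bookkeeping you add (that $\compactgraph{\alpha}$ is always a chain element via Proposition~\ref{prop:compact}, and that the endpoints of the initial call are $B_0$ and $B_k$) is implicit in the paper's version but does not change the argument.
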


\begin{proof}
Let $\set{B_i}$ be the set of locally-dense subgraphs.
We will prove the proposition by showing that for $i < j$, the algorithm
$\decompose(G, B_i, B_j)$ visits all monotonic subgraphs that are between $B_i$ and $B_j$.
We will prove this by induction over $j - i$.  The first step $j = i + 1$ is trivial.
Assume that $j > i + 1$. Then Proposition~\ref{prop:rich} implies that
$B_k = \compactgraph{\alpha}$, where $i < k < j$.
The inductive assumption now guarantees that $\decompose(G, B_i, B_k)$
and $\decompose(G, B_k, B_j)$ will visit all monotonic subgraphs
between $B_i$ and $B_j$.  
\end{proof}

\begin{proposition}
The worst-case running time of algorithm \decompose\ is 
$\bigo{n^2m}$.
\end{proposition}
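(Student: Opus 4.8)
The plan is to bound the running time of \decompose\ by counting the number of recursive calls and multiplying by the cost of each call. First I would observe that each call to \decompose\ performs one evaluation of $\compactgraph{\alpha}$, which as discussed in the preliminaries reduces to a single min-cut computation on a graph of size $\bigo{|V| + |E|}$, solvable in time $\bigo{|V||E|}$ by Orlin's algorithm. So the total running time is $\bigo{|V||E|}$ times the number of recursive invocations.

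Next I would bound the number of recursive calls. The recursion tree has the property that every internal node $\decompose(G, B_i, B_j)$ with $j > i + 1$ calls Proposition~\ref{prop:rich}, which discovers a genuinely new locally-dense subgraph $B_z$ strictly between $B_i$ and $B_j$, and then branches into two children $\decompose(G, B_i, B_z)$ and $\decompose(G, B_z, B_j)$. Each leaf corresponds either to a consecutive pair $(B_i, B_{i+1})$ in the decomposition or to the trivial base case. Since the decomposition has $k + 1 \le |V| + 1$ subgraphs in the chain $B_0 \subsetneq \cdots \subsetneq B_k$, there are at most $|V|$ consecutive pairs, hence $\bigo{|V|}$ leaves; a binary tree with $\bigo{|V|}$ leaves has $\bigo{|V|}$ internal nodes, so the total number of calls is $\bigo{|V|}$. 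Combining with the per-call cost gives $\bigo{|V| \cdot |V||E|} = \bigo{|V|^2|E|}$.

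The main subtlety — and the step I would be most careful about — is justifying that the recursion terminates and that distinct internal nodes discover distinct subgraphs, so that the leaf count really is bounded by the chain length. This follows from Proposition~\ref{prop:rich}: whenever $x + 1 < y$ we get $x < z < y$, so each branching strictly shrinks the ``gap'' $j - i$ in the index space, and the subgraph $B_z$ produced sits strictly between its parent's endpoints; conversely when $x + 1 = y$ the call returns $z = x$ and the \texttt{If} test $Z \neq X$ fails, terminating that branch. Hence every internal node is in one-to-one correspondence with a subgraph $B_z$ for some $0 < z < k$ that is introduced as a new ``split point,'' and there are only $k - 1 \le |V| - 1$ such split points. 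I would also note that the additive perturbation $\abs{V}^{-2}$ used in defining $\alpha$ is exactly what Proposition~\ref{prop:rich} needs, so no extra care is required there. Assembling these observations yields the claimed $\bigo{|V|^2|E|}$ bound.
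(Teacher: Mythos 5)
Your proof is correct and follows essentially the same approach as the paper: bound the number of calls to $\compactgraph{}$ by $\bigo{|V|}$ (the paper counts them exactly as $2k-3$ via which calls return $X$ versus discover a new subgraph, while you use an equivalent recursion-tree/leaf-counting argument) and multiply by the $\bigo{|V||E|}$ cost of each min-cut computation.
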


\begin{proof}
We will show that the algorithm \decompose, initiated with input
$(G, \emptyset, V)$ makes $2k - 3$ calls to the function 
$\compactgraph{}$,  
where $k$ is the number of locally-dense subgraphs.

Let $k_i$ be the number of calls of $\compactgraph{}$ when the
input parameter $Y = B_i$.  
Out of these $k_i$ calls one call will result in $\compactgraph{\alpha} = X$.
There are $k - 1$ such calls, since $Y = \emptyset$ is never tested.
Each of the remaining calls will discover a new locally-dense subgraph.
Since there are $k - 2$ new subgraphs to discover, it follows that 
$2k-3$ calls to $\compactgraph{}$ are needed.

Since a call to  $\compactgraph{}$ corresponds to a min-cut
computation, which has running time 
$\bigo{nm}$~\citep{Orlin:2013wu}, and since $k\in\bigo{n}$,
the claimed running-time bound follows.
\end{proof}

\subsection{Speeding up the exact algorithm}

Our next step is to speed-up \decompose. This speed-up does not 
improve the theoretical bound for the computational time but, in
practice, it improves the performance of the algorithm dramatically. 

The speed-up is based on the following observation. We know from
Proposition~\ref{prop:rich} that $\decompose(G, X, Y)$ visits only
subgraphs $Z$ with the property $X \subseteq Z \subseteq Y$. This gives us
immediately the first speed-up: we can safely ignore any vertex outside $Y$, that
is, $\decompose(G(Y), X, Y)$ will yield the same output.

Our second observation is that any subgraph $Z$ visited by $\decompose(G, X,
Y)$ must contain vertices $X$. However, we cannot simply delete them because we
need to take into account the edges between $X$ and $Z$. To address this
let us consider the following maximizer
\[
	\compactgraph{\alpha; X} = \arg \max_{X \subseteq W\subseteq V} \set{ \abs{E(W)} - \alpha \abs{W}}\quad.
\]
We can replace the original $\compactgraph{\alpha}$ in
Algorithm~\ref{algorithm:exact} with $\compactgraph{\alpha; X}$. To compute
$\compactgraph{\alpha; X}$ we will use a straightforward extension of the
Goldberg's algorithm~\citep{Goldberg:1984up} and transform this problem into a
problem of finding a minimum cut.

In order to do this, given a graph $G = (V, E)$, let us define a weighted graph $H$
that consists of vertices $V \setminus X$ and edges $E(V \setminus X)$ with weights of 1.
Add two auxiliary vertices $s$ and $t$ into $H$ and connect these vertices to every vertex in
$V \setminus X$. Given a vertex $y \in V \setminus X$,
assign a weight of $2\alpha$ to the edge $(y, t)$ and a weight of
\[
	w(y) = \dg{y; V \setminus X} + 2\dg{y; X}
\]
to the edge $(s, y)$, where $\dg{y; U}$ stands for the number of neighbors of $y$ in $U$.  We claim that solving a minimum cut such that $s$ and $t$ are in different cuts
will solve $\compactgraph{\alpha; X}$. This cut can be obtained by constructing a maximum flow from $s$ to $t$.

To prove this claim let $C \subsetneq V(H)$ be a subset of vertices containing
$s$ and not containing $t$. Let $Z = C \setminus \set{s}$ and also let $W = V \setminus (Z \cup X)$.
There are three types of cross-edges from $C$ to $V(H) \setminus C$:
\emph{(i)} edges from $x \in Z$ to $t$,
\emph{(ii)} edges from $s$ to $x \in W$, and
\emph{(iii)} edges from $x \in Z$  to $y \in W$. The total cost of $C$ is then
\[
	2\abs{Z}\alpha + \sum_{y \in W} w(y) + \abs{\crossedges{Z, W}}\quad.
\]
We claim that the last two terms of the cost are equal to $2\abs{E} - 2\abs{E(X \cup Z)}$.
To see this, consider an edge $e = (x, y)$ in $E \setminus E(X \cup Z)$. This implies that at least
one of the end points, assume it is $y$, has to be in $W$. There are three different cases for $x$:
\emph{(i)} if $x\in W$, then $e$ contributes 2 to the cost: 1 to $w(x)$ and 1 to $w(y)$,
\emph{(ii)} if $x \in X$, then $e$ contributes $2$ to $w(y)$, and
\emph{(iii)} if $x \in Z$, then $e$ contributes $1$ to $w(y)$ and $1$ to the third term. Thus, we
can write the cut as
\[
	2\abs{Z}\alpha + 2\abs{E} - 2\abs{E(X \cup Z)}  = 2\abs{E} - 2\abs{X}\alpha - 2(\abs{E(X \cup Z)} - \alpha\abs{Z \cup X})\quad.
\]
The first two terms in the right-hand side are constant which implies that
that finding the minimum cut is equivalent of maximizing $\abs{E(X \cup Z)} - \alpha\abs{Z \cup X}$.
Consequently, if $Z^*$ is the min-cut solution, then $\compactgraph{\alpha} = X \cup Z^*$.

Note that the graph $H$ does not have vertices included in $X$.
By combining both speed-ups we are able to reduce the running time of $\decompose(X, Y)$ by considering
only the vertices that are in $Y \setminus X$.

\subsection{Linear approximation algorithm}

As we saw in the last section, the exact algorithm can be
significantly accelerated, and indeed, our experimental evaluation
shows that it is possible to run the exact algorithm for a graph of
millions of vertices and edges within 2 minutes. 
Nevertheless, the worst-case complexity of the algorithm is cubic, and
thus, it is not truly scalable for massive graphs.

Here we present a more lightweight algorithm for performing a
locally-dense decomposition of a graph. 
The algorithm runs in linear time and offers a factor-$2$
approximation guarantee.
As the exact algorithm builds on Goldberg's algorithm for the
densest-subgraph problem, 
the linear-time algorithm builds on Charikar's approximation algorithm
for the same problem~\citep{Charikar:2000tg}.
As already explained in Section~\ref{sec:prel}, 
Charikar's approximation algorithm iteratively removes the vertex with
the lowest degree, until left with an empty graph, and returns the
densest graph among all subgraphs considered during this process.

Our extension to this algorithm, called \peel, is illustrated in
Algorithm~\ref{algorithm:peel}, and it operates in two phases. 
The first phase is identical to the one in Charikar's algorithm:
all vertices of the graph are iteratively removed, 
in increasing order of their degree in the current graph.
In the second phase, the algorithm proceeds to discover approximate
locally-dense subgraphs, in an iterative manner, from $B_1$ to $B_k$.
The first subgraph $B_1$ is the approximate densest subgraph, the same
one returned by Charikar's algorithm.
In the $j$-th step of the iteration, having discover subgraphs 
$B_1,\ldots, B_{j-1}$
the algorithm selects the subgraph $B_j$ that maximizes the density
$\density{B_j,B_{j-1}}$.
To select $B_j$ the algorithm considers subsets of vertices only in
the degree-based order that was produced in the first phase. 

\begin{algorithm}[t]
\caption{\label{algorithm:peel}$\peel(G)$}
\Input{Graph $G = (V, E)$}
\Output{Collection $\col{C}$ of approximate locally-dense subgraphs}
	\For {$i = n, \ldots, 1$} {
		$w_i \define$  the vertex with the smallest degree\;
		delete $w_i$ from $V$\;
	}
	$\col{C} \define \set{\emptyset}$\;		
	$j \define 0$\;
	\While {$j < n$} {
		$i \define \arg \max_{i > j}  \density{\enset{w_1}{w_i}, \enset{w_1}{w_j}}$\;
		add $\enset{w_1}{w_i}$ to $\col{C}$\;
		$j \define i$\;
	}
	\Return $\col{C}$\;
\end{algorithm}

Discovering $\col{C}$ from the ordered vertices takes $\bigo{n^2}$ time, if
done naively. However, it is possible to implement this step in $\bigo{n}$
time. In order to do this, sort vertices in the reverse visit order,
and define $\din{v}$ to be the number of edges of
$v$ from the earlier neighbors. Then, we can we express the density as an average,
\[
	\density{\enset{w_1}{w_i}, \enset{w_1}{w_j}} = \frac{1}{i - j}\sum_{k = j + 1}^i \din{v_k}\quad.
\]
Consequently, we can see that recovering $\col{C}$ is an instance of the following problem,
\begin{problem}
Given a sequence $y_1, \ldots, y_n$, compute the maximal interval
\[
	m(j) = \arg \max_{j \leq i \leq n} \frac{1}{i - j + 1}\sum_{k = j}^i y_k,
	\quad\text{for every}\quad 1 \leq j \leq n\quad.
\]
\end{problem}

Luckily, \citet{DBLP:journals/is/CaldersDGG14} demonstrated that we can use the classic PAVA
algorithm by~\citet{PAV} to solve this problem for \emph{every} value of $j$ in total $\bigo{n}$ time. 

To quantify the approximation guarantee of \peel, 
note that the sequence of approximate locally-dense subgraphs produced
by the algorithm are not necessarily aligned with the locally-dense
subgraphs of the optimal decomposition. 
In other words, to assess the quality of the density of an approximate
locally-dense subgraph $B_j$ produced by \peel, 
there is no direct counterpart in the optimal decomposition to compare.
To overcome this difficulty we develop a scheme of ``vertex-wise''
comparison, where for any $1\le i\le n$, 
the density of the smallest approximate locally-dense subgraph of
size at least $i$  
is compared with the density of the smallest optimal locally-dense
subgraph of size at least $i$.  
This is defined below via the concept of {\em profile}.

\begin{definition}
\label{definition:profile}
Let $\mathcal{B} = (\emptyset = B_0 \subsetneq B_1 \subsetneq \cdots
\subsetneq B_k = V)$ be a nested chain of subgraphs, the first
subgraph being the empty graph and the
last subgraph being the full graph. 
For an integer $i$, $1 \leq i \leq n$ define
\[
	j = \min \set{x \mid \abs{B_x} \geq i}
\]
to be the index of the smallest subgraph in $\mathcal{B}$ whose size
is at least $i$. We define a {\em profile function}
$\funcdef{\prof{}}{\enset{1}{n}}{\real}$ to be 
\[
	\prof{i;\,\mathcal{B}} = \density{B_j, B_{j - 1}}\quad.
\]
\end{definition}

Our approximation guarantee is now expressed as a guarantee of the
profile function of the approximate decomposition with respect to the
optimal decomposition.

\begin{proposition}
\label{prop:peel-approximation}
Let $\col{B} = \set{B_i}$ be the set of locally-dense subgraphs.
Let $\col{C} = \set{C_i}$ be the subgraphs obtained by \peel.
Then
\[
	\prof{i;\,\col{C}} \geq \prof{i;\,\col{B}} / 2\quad.
\]
\end{proposition}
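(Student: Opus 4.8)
The goal is a vertex-wise comparison of profiles. The plan is to fix an index $i$, $1\le i\le n$, let $B_j$ be the smallest optimal locally-dense subgraph with $\abs{B_j}\ge i$, so that $\prof{i;\col{B}}=\density{B_j,B_{j-1}}$, and let $C_r$ be the smallest subgraph produced by \peel\ with $\abs{C_r}\ge i$, so that $\prof{i;\col{C}}=\density{C_r,C_{r-1}}$. I want to show $\density{C_r,C_{r-1}}\ge\density{B_j,B_{j-1}}/2$. The natural route is to exploit the greedy/peeling structure: when \peel\ chose $C_r$ it maximized $\density{W,C_{r-1}}$ over all prefixes $W\supsetneq C_{r-1}$ of the peeling order, so it suffices to exhibit \emph{one} prefix $W$ with $\abs{W}\ge i$ and $\density{W,C_{r-1}}\ge\density{B_j,B_{j-1}}/2$; then $\density{C_r,C_{r-1}}$ is at least that, and since $C_r$ is the \emph{smallest} \peel-subgraph of size $\ge i$ while this $W$ has size $\ge i$, monotonicity of the profile along the \peel-chain (Corollary~\ref{corollary:chain} applied to $\col{C}$, which holds because \peel's second phase picks densest-next subgraphs exactly as in Proposition~\ref{prop:maximal}) gives $\prof{i;\col{C}}=\density{C_r,C_{r-1}}\ge\min$ over the relevant step, which is still $\ge\density{B_j,B_{j-1}}/2$.

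**Key steps.** First I would pin down the candidate prefix $W$: take $W$ to be the prefix of the peeling order consisting of the first $\abs{B_j}$ vertices removed last, i.e. the \peel-analogue of $B_j$, or more robustly the union $C_{r-1}\cup(\text{first }\abs{B_j}\text{ vertices})$. Second, I would lower-bound $\density{W,C_{r-1}}$. Here the classical Charikar argument enters: every vertex $v\in B_j\setminus B_{j-1}$ has, inside the induced subgraph on $B_j$, degree at least $2\density{B_j,B_{j-1}}$ on average — more precisely $\abs{\marginaledges{B_j\setminus B_{j-1},B_{j-1}}}=\density{B_j,B_{j-1}}\abs{B_j\setminus B_{j-1}}$, and by local-density (no $X\subseteq B_j$ beats this ratio, via Lemma~\ref{lem:delete}) every vertex still present when the peeling order reaches into $B_j$ has current degree at least $\density{B_j,B_{j-1}}$; hence when \peel\ removed the vertex of smallest degree at that stage it removed one of degree $\ge\density{B_j,B_{j-1}}$, so the corresponding prefix carries $\din{}$-values summing to at least $\density{B_j,B_{j-1}}$ per vertex over the relevant range, giving outer density $\ge\density{B_j,B_{j-1}}/2$ after accounting for double counting of internal edges (the factor $2$). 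Third, I would use Corollary~\ref{corollary:chain} for $\col{C}$ to argue the profile $\prof{\cdot;\col{C}}$ is nonincreasing in the subgraph index, so that passing from the candidate prefix $W$ to the actual smallest \peel-subgraph $C_r$ of size $\ge i$ does not decrease the relevant outer density below the bound — this is where I must be careful that ``smallest of size $\ge i$'' aligns with the step whose outer density I bounded.

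**Main obstacle.** The crux is the second step: translating ``$B_j$ is locally dense of outer density $\delta:=\density{B_j,B_{j-1}}$'' into ``at the moment \peel's peeling reaches the set $B_j$, every remaining vertex of $B_j\setminus(\text{already removed})$ has current degree $\ge\delta$ in the current graph,'' and then converting a bound on removed-vertex degrees into a bound on the \emph{outer} density $\density{W,C_{r-1}}$ of a prefix, with the correct factor of $2$. The degree-lower-bound needs Lemma~\ref{lem:delete}: if some nonempty $Z\subseteq B_j\setminus B_{j-1}$ had $\density{Z,B_j\setminus Z}<\delta$ we could delete it and increase $\density{B_j,B_{j-1}}$, contradicting that $B_j$ is the densest subgraph properly containing $B_{j-1}$ (Proposition~\ref{prop:maximal}); iterating this, the minimum ``marginal degree'' over $B_j\setminus B_{j-1}$ is $\ge\delta$, and this is exactly what survives peeling. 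I expect the bookkeeping around whether one compares against $C_{r-1}$ or $B_{j-1}$ as the ``base'' — and ensuring the inequality $\crossedges{\cdot,C_{r-1}}\ge\crossedges{\cdot,B_{j-1}}$ or its converse points the right way — to be the fiddly part; the factor $2$ itself should fall out exactly as in Charikar's original analysis, since each internal edge of the prefix is counted once in $\marginaledges{}$ but contributes to two vertices' degrees.
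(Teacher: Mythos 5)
Your proposal follows essentially the same route as the paper's proof: lower-bound the peeling degree at the moment the removal process first touches $B_j$ via Lemma~\ref{lem:delete} (and the nesting of the optimal chain for vertices already in $B_{j-1}$), convert the sum of in-degrees over the relevant segment of the order into an outer density with a factor-$2$ loss from double counting, and finish by the optimality of \peel's greedy choice over prefixes. One caution: the witness prefix must be the one ending at the \emph{last} vertex of $B_j$ in the reverse peeling order (it contains all of $B_j$ and may be longer than $\abs{B_j}$), not the prefix of length $\abs{B_j}$ — your degree bound $g_y \geq \density{B_j,B_{j-1}}$ only holds while all of $B_j$ is still present; with that choice the argument goes through exactly as in the paper, and your final monotonicity step is unnecessary because that prefix already properly contains $C_{r-1}$ and is therefore a candidate in the argmax defining $C_r$.
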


First, we need the following lemma.

\begin{lemma}
\label{lem:degree}
$\density{v, B_i \setminus \set{v}} \geq \density{B_i, B_{i - 1}}$,
for $v \in B_i \setminus B_{i - 1}$,
\end{lemma}

\begin{proof}
Assume otherwise. Lemma~\ref{lem:add} now states that
$\density{B_i \setminus \set{v}, B_{i - 1}} < \density{B_i, B_{i - 1}}$,
which violates the optimality of $B_i$ as indicated by
Proposition~\ref{prop:maximal}.
\end{proof}

\begin{proof}[Proof of Proposition~\ref{prop:peel-approximation}]
Sort the set of vertices $V$ according to the reverse visiting order
of \peel and let $\din{v}$ be the number of edges of $v$
from earlier neighbors.

Fix $k$ to be an integer, $1 \leq k \leq n$ and let $B_i$ be the smallest subgraph
such that $\abs{B_i} \geq k$.  Let $v_j$ be the last vertex occurring in
$B_i$.  We must have $\din{v_j} \geq \density{v_j, B_i \setminus \set{v_j}}$, and, due to Lemma~\ref{lem:degree}, $\density{v_j,
B_i \setminus \set{v_j}} \geq \density{B_i , B_{i - 1}}$. In summary, we have
\[
	\prof{k;\,\mathcal{B}} = \density{B_i, B_{i - 1}} \leq \din{v_j}\quad.
\]

Let $C_x$ be the smallest subgraph such that $\abs{C_x} \geq k$.
Let $v_z$ be the vertex with the smallest index that is still in $C_x \setminus C_{x - 1}$
and define $A = \set{v_z, \ldots, v_j}$.
Let $g(v)$ be the degree of $v \in A$ right before $v_j$ is removed during \peel. 
Note that, by definition, $\din{v_j} \leq g(v)$, and that 
\[
	\sum_{v \in A} g(v) = 2\abs{\edges{A}} + \abs{\crossedges{A, C_{x - 1}}} 
	\leq 2\abs{\edges{A}} + 2\abs{\crossedges{A, C_{x - 1}}} = 2\sum_{v \in A} \din{v}
	\quad.
\]
This leads to
\[
\prof{k;\,\mathcal{C}}  =  \density{C_x, C_{x - 1}}  
 \geq  \density{A, C_{x - 1}} = \frac{1}{\abs{A}} \sum_{v \in A} \din{v} 
 \geq  \frac{1}{2\abs{A}} \sum_{g(v) \in A} g(v)  
 \geq  \frac{\din{v_j}}{2},
\]
where the optimality of $C_x$ implies the first inequality.
\end{proof}
We should point out that $\prof{1, \col{B}}$ is equal to the density of the
densest subgraph, while $\prof{1, \col{C}}$ is equal to the density of the
subgraph discovered by the Charikar's algorithm. Consequently,
Proposition~\ref{prop:peel-approximation} provides automatically the
2-approximation guarantee of the Charikar's algorithm.

We should also point out that $\prof{i, \col{C}}$ can be larger than $\prof{i,
\col{B}}$. However, if $j$ is the \emph{first} index, for which $\prof{j, \col{C}} \neq
\prof{j, \col{B}}$, then Proposition~\ref{prop:maximal} guarantees that $\prof{j,
\col{C}} < \prof{j, \col{B}}$.

\section{Locally-dense subgraphs and core decomposition}
\label{sec:core}

Here we study the connection of graph cores, 
obtained with the well-known $k$-core decomposition algorithms, 
with local-density, studied in this paper. 
We are able to show that from the theory point-of-view, 
graph cores are as good approximation to the optimal locally-dense
graph decomposition as the subgraphs obtained by the \peel algorithm.
In particular we show a similar result to
Proposition~\ref{prop:peel-approximation}, 
namely, a factor-$2$ approximation on the profile function of the core
decomposition.

However, as we will see in our empirical evaluation, 
the behavior of the two algorithms, 
\peel and $k$-core decomposition are different in practice, 
with \peel giving in general more dense subgraphs and closer to the
ones given by exact locally-dense decomposition.

Before stating and proving the result regarding $k$-cores, 
recall that a set of vertices $X\subseteq V$ is a $k$-core if 
every vertex in the subgraph induced by $X$ has degree at least $k$,
and $X$ is maximal with respect to this property. 
A linear-time algorithm for obtaining all $k$-cores is illustrated in 
Algorithm~\ref{algorithm:k-core}. 

\begin{algorithm}[t]
\caption{\label{algorithm:k-core}$\core(G)$}
\Input{Graph $G = (V, E)$}
\Output{Collection $\mathcal{C}$ of $k$-cores}
   $\col{C} \define \set{V}$\;
   $k \define \min_w \dg{w}$\;
   \For {$i = n, \ldots, 1$} {
        $w_i \define$  the vertex with the smallest degree\;
		\If {$\dg{w} > k$} {
			add $V$ to $\col{C}$\;
			$k \define \dg{w}$\;
		}
        delete $w_i$ from $V$\;
    }
    \Return $\mathcal{C}$\;
\end{algorithm}

It is a well-known fact that the set of all $k$-cores of a graph forms
a nested chain of subgraphs, in the same way that locally-dense
subgraphs do. 

\begin{proposition}
\label{prop:core-approximation}
Let $\set{C_i}$ be the set of all $k$-cores of a graph $G=(V,E)$.
Then $\set{C_i}$ forms a nested chain,
\[
	\emptyset = C_0\subsetneq C_1 \subsetneq \cdots \subsetneq C_l = V\quad.
\]
\end{proposition}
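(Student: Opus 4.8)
The plan is to derive the whole chain from one monotonicity fact: \emph{for any $k \leq k'$, the $k'$-core is contained in the $k$-core}, together with uniqueness of the $k$-core for each fixed $k$. Throughout, write $D_k$ for the $k$-core, i.e. the maximal vertex set in which every vertex has degree at least $k$ in the induced subgraph.

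First I would prove that $D_k$ is well defined. Suppose $X_1$ and $X_2$ are two vertex sets such that every vertex has induced degree at least $k$ in $G[X_1]$, respectively $G[X_2]$. The key observation is that in $G[X_1 \cup X_2]$ a vertex keeps all the neighbors it already had in $G[X_1]$ or in $G[X_2]$, so $\degreein{G[X_1 \cup X_2]}{v} \geq k$ for every $v \in X_1 \cup X_2$. Hence $X_1 \cup X_2$ again has the minimum-induced-degree-$\geq k$ property, and if $X_1$ and $X_2$ are both maximal we are forced to conclude $X_1 = X_2 = X_1 \cup X_2$. Thus $D_k$ is unique.

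Next, for $k \leq k'$ the set $D_{k'}$ has minimum induced degree at least $k' \geq k$, so in particular at least $k$. By the same union argument, $D_k \cup D_{k'}$ has minimum induced degree at least $k$, and maximality of $D_k$ forces $D_{k'} \subseteq D_k$. Consequently the family $\set{D_k : k = 0, 1, 2, \dots}$ is totally ordered by reverse inclusion: $D_0 \supseteq D_1 \supseteq D_2 \supseteq \cdots$. To pin down the endpoints, note that every vertex trivially has induced degree at least $0$, so $D_0 = V$; and once $k$ exceeds the maximum degree of $G$ no nonempty set can have minimum induced degree $\geq k$, so $D_k = \emptyset$ for all sufficiently large $k$. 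Dropping repetitions from the weakly decreasing sequence $V = D_0 \supseteq D_1 \supseteq \cdots \supseteq \emptyset$ and reversing the order yields exactly the strictly increasing chain $\emptyset = C_0 \subsetneq C_1 \subsetneq \cdots \subsetneq C_l = V$ claimed in the proposition.

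The argument is entirely elementary, so there is no real obstacle here; the only nontrivial ingredient is the ``union preserves minimum induced degree'' observation, which simultaneously gives uniqueness and nesting. The single point that deserves a moment of care is the maximality clause in the definition of a $k$-core: it is precisely what makes $D_k$ unique and what converts the statement ``$D_{k'}$ satisfies the degree condition inside $D_k \cup D_{k'}$'' into the containment $D_{k'} \subseteq D_k$.
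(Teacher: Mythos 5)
Your proof is correct. The paper itself offers no proof of this proposition --- it is stated twice (once in the preliminaries, once in Section~5) as a well-known fact ``not hard to see'' --- so there is nothing to compare against; your argument is the canonical one. The single observation that the union of two sets each satisfying the minimum-induced-degree-$\geq k$ condition again satisfies it does all the work: it gives existence and uniqueness of the maximal such set $D_k$, and applied to $D_k \cup D_{k'}$ it gives the containment $D_{k'} \subseteq D_k$ for $k \leq k'$, from which the chain follows by discarding repetitions. You also correctly handle the endpoints ($D_0 = V$ and $D_k = \emptyset$ for $k$ beyond the maximum degree), which is what makes the chain in the proposition begin at $\emptyset$ and end at $V$.
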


Similar to 
Proposition~\ref{prop:peel-approximation}, 
$k$-cores provide a factor-$2$ approximation 
with respect to the locally-dense subgraphs.
The proof is in fact quite similar to that of 
Proposition~\ref{prop:peel-approximation}.

\begin{proposition}
Let $\col{B} = \set{B_i}$ be the set of locally-dense subgraphs.
Let $\col{C} = \set{C_i}$ be the set of $k$-cores.
Then
\[
	\prof{i;\,\col{C}} \geq \prof{i;\,\col{B}} / 2\quad.
\]
\end{proposition}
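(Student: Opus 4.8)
The plan is to follow the proof of Proposition~\ref{prop:peel-approximation} almost verbatim, substituting the vertex order produced by Algorithm~\ref{algorithm:k-core} for the \peel order; both algorithms repeatedly remove a minimum-degree vertex, so the two orders play exactly the same role. Concretely, I would sort $V$ in reverse visiting order as $w_1,\ldots,w_n$ and let $\din{v}$ be the number of neighbors of $v$ among the earlier $w$'s. The one structural fact to record at the outset is that every $k$-core output by Algorithm~\ref{algorithm:k-core} is a prefix $\set{w_1,\ldots,w_p}$, and that the core $C_x=\set{w_1,\ldots,w_p}$ recorded when $w_p$ is removed is precisely a $\kappa$-core with $\kappa=\din{w_p}=\min_{w}\dg{w}$ at that step; writing $C_{x-1}=\set{w_1,\ldots,w_q}$ for the preceding core, this gives $\density{C_x,C_{x-1}}=\frac{1}{p-q}\sum_{y=q+1}^{p}\din{w_y}$, the same expression as for the \peel subgraphs.

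The half of the argument that bounds $\prof{i;\,\col{B}}$ from above transfers unchanged. Fixing $i$, let $B_{i'}$ be the smallest optimal locally-dense subgraph with $\abs{B_{i'}}\ge i$, and let $v_j$ be its largest-index vertex, so $j\ge\abs{B_{i'}}\ge i$. Letting $m^\star$ be the first index at which $v_j$ enters the optimal chain, $v_j$ is the largest-index vertex of $B_{m^\star}$ and lies in $B_{m^\star}\setminus B_{m^\star-1}$, so $\din{v_j}\ge\degreein{B_{m^\star}}{v_j}\ge\density{B_{m^\star},B_{m^\star-1}}$ by Lemma~\ref{lem:delete} and Proposition~\ref{prop:maximal}, and monotonicity of the chain (Corollary~\ref{corollary:chain}) gives $\din{v_j}\ge\density{B_{i'},B_{i'-1}}=\prof{i;\,\col{B}}$. (This is a minor sharpening of the corresponding step of Proposition~\ref{prop:peel-approximation}, needed because the largest-index vertex of $B_{i'}$ need not lie in $B_{i'}\setminus B_{i'-1}$.)

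The genuinely new ingredient is the lower bound on $\prof{i;\,\col{C}}$, since a $k$-core is not the densest prefix over its predecessor and the ``optimality of $C_x$'' used in Proposition~\ref{prop:peel-approximation} is unavailable. I would replace it by two facts about $C_x$, the smallest $k$-core with $\abs{C_x}\ge i$, say the $\kappa$-core. First, every vertex of $C_x$ has at least $\kappa$ neighbors inside $C_x$; splitting, for each vertex of the shell $S=C_x\setminus C_{x-1}$, its neighbors within $S$ into earlier and later ones and summing over $S$ yields $\kappa\abs{S}\le\sum_{w\in S}\din{w}+\abs{\edges{S}}\le 2\,\abs{\marginaledges{S,C_{x-1}}}$, i.e.\ $\density{C_x,C_{x-1}}\ge\kappa/2$. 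Second, $\kappa\ge\din{v_j}$: the graph on $\set{w_1,\ldots,w_j}$ present just before $v_j$ is removed has minimum degree exactly $\din{v_j}$ and has $j\ge i$ vertices, hence sits inside the $\din{v_j}$-core, which therefore also has at least $i$ vertices; since the $k$-cores are nested (Proposition~\ref{prop:core-approximation}) and $C_x$ is the smallest one of size $\ge i$, $C_x$ is contained in the $\din{v_j}$-core, which forces $\kappa\ge\din{v_j}$. Chaining everything, $\prof{i;\,\col{C}}=\density{C_x,C_{x-1}}\ge\kappa/2\ge\din{v_j}/2\ge\prof{i;\,\col{B}}/2$.

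The step I expect to require the most care is exactly this last one: without the optimality of $C_x$ the factor $2$ must be produced from scratch, and it arises solely from double-counting the edges internal to the shell in $\density{C_x,C_{x-1}}\ge\kappa/2$; one must also check that the nestedness argument really yields $\kappa\ge\din{v_j}$ and not something weaker, which hinges on the $\din{v_j}$-core being the \emph{maximal} subgraph of minimum degree at least $\din{v_j}$ and hence containing the peeled prefix $\set{w_1,\ldots,w_j}$. Everything else is bookkeeping inherited from Proposition~\ref{prop:peel-approximation}.
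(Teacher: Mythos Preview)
Your proof is correct and follows the same overall skeleton as the paper's: bound $\prof{i;\col{B}}$ above by $\din{v_j}$ using Lemma~\ref{lem:delete}, then bound $\prof{i;\col{C}}$ below by $\din{v_j}/2$.

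The difference lies only in how the lower bound is packaged. The paper mirrors the proof of Proposition~\ref{prop:peel-approximation} line by line: it takes $v_l$ to be the last vertex of the shell $C_x\setminus C_{x-1}$, argues $\din{v_j}\le\din{v_l}$ by a two-case analysis on $j\lessgtr l$ (``otherwise $C_x$ is not a core'' / ``otherwise $C_x$ is not the smallest core with at least $k$ vertices''), and then reuses the $g_y$ averaging trick to obtain $\density{C_x,C_{x-1}}\ge\din{v_l}/2$. Your version replaces both steps by their $k$-core-native formulations: the averaging trick becomes the direct count $\kappa\abs{S}\le 2\abs{\marginaledges{S,C_{x-1}}}$ from the minimum-degree condition, and the two-case analysis becomes a single nestedness argument showing $\kappa\ge\din{v_j}$. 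Since $\din{v_l}=\kappa$, the two arguments are literally the same inequalities reorganised; your phrasing makes the role of the defining properties of $k$-cores (minimum degree and maximality) more explicit, while the paper's phrasing keeps the parallel with Proposition~\ref{prop:peel-approximation} visible. Your added care about $v_j$ possibly lying in an inner $B_{m^\star}$ is a genuine (minor) tightening of the upper-bound step that the paper glosses over.
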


\begin{proof}
Sort $V$ according to the reverse visiting order of \core
and let $\din{v}$ be the number of edges of $v$
from earlier neighbors.

Fix $k$ to be an integer, $1 \leq k \leq n$ and let $B_i$ be the smallest subgraph
such that $\abs{B_i} \geq k$.  Let $v_j$ be the last vertex occurring in
$B_i$.  We must have $\din{v_j} \geq \density{v_j, B_i \setminus \set{v_j}}$, and, due to Lemma~\ref{lem:degree}, $\density{v_j,
B_i \setminus \set{v_j}} \geq \density{B_i , B_{i - 1}}$. In summary, we have
\[
	\prof{k;\,\mathcal{B}} = \density{B_i, B_{i - 1}} \leq \din{v_j}\quad.
\]

Let $C_x$ be the smallest core such that $\abs{C_x} \geq k$, and write $A = C_x \setminus C_{x - 1}$.
Let $v_s$ be the vertex with the smallest index that is still in $A$, and
let $v_l$ be the vertex with the largest index that is still in $A$,
that is, $\set{v_s, \ldots, v_l} = A$.

If $j > l$, then $\din{v_j} < \din{v_l}$, otherwise $C_x$ is not a core.
If $j < l$, then $\din{v_j} \leq \din{v_l}$, otherwise $v\j \notin C_x$,
and since $j \geq k$, then $C_x$
is not the smallest core  with at least $k$ vertices, which is a contradiction.
Hence, $\din{v_j} \leq \din{v_l}$.

Let $g(v)$ be the degree of $v \in A$ right before $v_l$ is removed during \core.
We now have
\begin{eqnarray*}
\prof{k;\,\mathcal{C}} & = & \density{C_x, C_{x - 1}} \\
&  = & \frac{1}{\abs{A}} \sum_{v \in A} \din{v} \\
& \geq & \frac{1}{2\abs{A}} \sum_{v \in A} g(v) \\ 
& \geq & \frac{\din{v_l}}{2} \\ 
& \geq & \frac{\din{v_j}}{2},
\end{eqnarray*}
which proves the proposition.
\end{proof}

\section{Segmentation problem: constraining the number of subgraphs}
\label{sec:segmentation}

It is possible that the decomposition yields a significant amount of subgraphs.
In such a case it may be useful to constraint the number of the subgraphs.  In
order to do so we need to define an optimization criterion, which will be our
first step.  We then demonstrate how to solve the problem exactly, and how to
estimate the solution efficiently. 

\subsection{Problem definition}

Our goal is to discover $k$ nested subgraphs that minimize a certain cost. We
base the cost on the degree of a node, relative to the subgraph.  A natural
approach here is to model the degree, that is, our goal is to maximize the
log-likelihood $\sum_v \log p(\dg{v; C_i}; \lambda_i)$, where $C_i$ is the
smallest subgraph containing $v$ and $\lambda_i$ is a parameter of the
distribution.  Unfortunately, this is problematic due to the following reason:
an edge $(x, y)$, where $x, y \in C_i \setminus C_{i - 1}$ increases the
degrees of both $x$ and $y$, whereas an edge $(x, y)$, with $x \in C_i$ and $y
\in C_{i - 1}$ increases the degrees only for $x$ and \emph{not} for $y$. The
distribution we will consider favors small degrees, so this leads to a scenario
where the cost function implicitly favors having a lot of cross-edges.
To rectify this problem we introduce the notion of \emph{adjusted degree}, where we
count each cross-edge twice.

\begin{definition}
Assume a sequence of nested subgraphs
$\col{C} = \pr{\emptyset = C_0 \subsetneq C_1 \subsetneq \cdots \subsetneq C_k = V}$.
Let $v$ be a vertex and let $C_i$ be the smallest set containing $v$.
Define the \emph{adjusted degree} as
\[
	\adg{v; \col{C}} =
	\abs{\set{(v, u) \mid u \in C_i \setminus C_{i - 1}}} + 
	2\abs{\set{(v, u) \mid u \in C_{i - 1}}}\quad.
\]
\end{definition}

To reduce the clutter, we typically omit $\col{C}$ from the notation and write $\adg{v}$.

Next we give a formal definition of the problem.

\begin{definition}
Assume that we are given a distribution $p(\cdot ; r)$
for the adjusted degree. This distribution has one parameter $r$;
small values indicate the likelihood of high degrees.
Given a graph $G$ and an integer $k$, find a $k$-\emph{segmentation}, a sequence of nested subgraphs
$\col{C} = \pr{\emptyset = C_0 \subsetneq C_1 \subsetneq \cdots \subsetneq C_k = V}$
and parameters $\lambda_1 \leq \cdots \leq \lambda_k$,
minimizing the negative log-likelihood
\[
	\cost{\col{C}} = -\sum_{v \in V} \log p(\adg{v}; \lambda_i),
\]
where $i$ is the index of the smallest $C_i$ containing $v$.
\end{definition}
The reason why we write this problem as a minimization problem is because
typically the log-likelihood is negative, and in order to establish approximation guarantees
we need to have the cost function to be positive.

We are specifically interested in geometric and exponential distributions. Both
distributions can be written as $p(x; \lambda) = \exp(-\lambda x -
Z(\lambda))$, where $Z(\lambda)$ is the normalization constant\footnote{The
geometric distribution is defined over the integers whereas the exponential
distribution is defined over the real domain. This results in different
normalization constants.}.  Moreover, smaller values of $\lambda$ will result
in a distribution favoring larger degrees, that is, inner subgraphs should be denser.

\subsection{Exact algorithm}

In this section we demonstrate how to find an optimal segmentation using
locally-dense subgraphs. First we prove the key proposition that states that it
is enough to use locally-dense subgraphs when looking for the optimal
segmentation.

\begin{proposition}
\label{prop:segment}
Assume that $p$ is either exponential or geometric distribution.
Then there is an optimal segmentation
$\col{C} = \pr{\emptyset = C_0 \subsetneq C_1 \subsetneq \cdots \subsetneq C_k = V}$
such that each $C_i$ is locally-dense.
\end{proposition}

To prove the proposition, we need the following technical lemma.

\begin{lemma}
\label{lem:violate}
Let $C_1, \ldots, C_k$ be the optimal solution, and assume
some of the subgraphs are not locally-dense.
Then there is $C_i$ that is not locally-dense along with the violating sets $X$ and $Y$ such that $Y
\subseteq C_{i + 1}$ and $X \cap C_{i - 1} = \emptyset$.
\end{lemma}

\begin{proof}
Let $C_i$ be a set that is not locally-dense, and let $X$ and $Y$ be the violating sets. 
Next we argue that we can safely assume that $Y \subseteq C_{i + 1}$ and
$X \cap C_{i - 1} = \emptyset$. We will split the argument in two cases:
Case ($i$): $Y \nsubseteq C_{i + 1}$ and
Case ($ii$): $Y \subseteq C_{i + 1}$.

Assume Case ($i$). If $\density{C_{i + 1} \setminus C_i, C_i} \geq \density{X, C_i
\setminus X}$, then redefine $Y$ as $C_{i + 1} \setminus C_i$. In such case,
$X$ and $Y$ are still violating the local density but now we can use Case
($ii$). Assume that $\density{C_{i + 1} \setminus C_i, C_i} < \density{X, C_i
\setminus X}$.
Define $Y_1 = Y \cap C_{i + 1}$ and $Y_2 = Y \setminus Y_1$.
Note that $Y_2 \neq \emptyset$. Assume that $\density{Y_2, C_i \cup Y_1} \geq \density{Y, C_i}$.
Then
\[
	\density{Y_2, C_{i + 1}} \geq \density{Y_2, C_{i} \cup Y_1}  \geq \density{Y, C_i} \geq \density{X, C_i
	\setminus X} > \density{C_{i + 1} \setminus C_i, C_i}\quad.
\]
Redefine $Y$ as $Y_2$, $X$ as $C_{i + 1} \setminus C_i$, and increase $i$ by 1.
The previous arguments show that new $Y$ and $X$ violate the local density of $C_{i + 1}$,
so we repeat our argument with either Case ($i$) or Case ($ii$).

Assume now that $\density{Y_2, C_i \cup Y_1} < \density{Y, C_i}$.
This forces $Y_1 \neq \emptyset$.
Since $\density{Y, C_i}$ is a weighted average of $\density{Y_2, C_i \cup Y_1}$ and $\density{Y_1, C_i}$,
we have
$\density{Y, C_i} \leq \density{Y_1, C_i}$. Redefine $Y$ as $Y_1$, and apply Case ($ii$).

Assume Case ($ii$).
Write $X_1 = X \cap C_{i - 1}$ and $X_2 = X \setminus X_1$.
If $X_1 = \emptyset$, then we are done; assume otherwise.
If $\density{C_{i} \setminus C_{i - 1}, C_{i - 1}} \leq \density{Y, C_{i}}$, then we can replace $X$
with $C_{i} \setminus C_{i - 1}$ to complete the argument. Assume that $\density{C_{i} \setminus C_{i - 1}, C_{i - 1}} > \density{Y, C_{i}}$.

Assume $X_2 \neq \emptyset$.
If $\density{X_2, C_{i} \setminus X_2} \leq \density{X, C_{i} \setminus X}$, then we can replace $X$ with $X_2$
to complete the argument.
Assume $\density{X_2, C_{i} \setminus X_2} > \density{X, C_{i} \setminus X}$.
Note that $\density{X, C_{i} \setminus X}$ is a weighted average of
$\density{X_2, C_{i} \setminus X_2}$ and $\density{X_1, C_{i} \setminus X}$.
This implies that $\density{X_1, C_{i} \setminus X} < \density{X, C_{i} \setminus X}$.

On the other hand, if $X_2 = \emptyset$, then $X_1 = X$, and $\density{X_1, C_{i} \setminus X} = \density{X, C_{i} \setminus X}$.

Combining everything gives us
\[
	\density{X_1, C_{i - 1} \setminus X_1} \leq \density{X_1, C_{i} \setminus X} \leq \density{X, C_{i} \setminus X} \leq \density{Y, C_{i}} < \density{C_{i} \setminus C_{i - 1}, C_{i - 1}}\quad.
\]
Redefine $X$ as $X_1$, $Y$ as $C_{i} \setminus C_{i - 1}$ and decrease $i$ by one, and repeat Case ($ii$).

Note that we do first at most $k$ repetitions of Case ($i$), and then at most $k$ repetitions of Case ($ii$).
After a finite numer of repetitions we end up with $C_i$ that satisfies the conditions.
This completes the proof.
\end{proof}

\begin{proof}[Proof of Proposition~\ref{prop:segment}]
Both geometric and exponential distributions can be written as $p(x; \lambda) = \exp(-\lambda x - Z)$,
where $Z$ is the normalization constant (depending on $\lambda$).

Write $B_i = C_i \setminus C_{i - 1}$.
We can write the optimization function as
\[
	\cost{\col{C}}
	= \sum_{i = 1}^k \sum_{v \in B_i} Z_i + \lambda_i \adg{v; C_i}
	= \sum_{i = 1}^k \abs{B_i}(Z_i + 2\lambda_i \density{B_i, C_{i - 1}}),
\]
where $Z_i$ is the normalization constant for the parameter $\lambda_i$.

Assume that $C_i$ is not locally-dense, that is, there is $X$ and $Y$ that
violate the local density. Lemma~\ref{lem:violate} states that we can safely assume that
$Y \subseteq C_{i + 1}$ and $X  \cap C_{i - 1} = \emptyset$. 
This allows us to either remove $X$ from $C_i$ or add $Y$ to $C_i$ without changing the other sets.

The cost of the $i$th and the $i + 1$th segment is equal to 
\[
	\abs{B_i}Z_i + \abs{B_{i + 1}}Z_{i + 1} + 2\lambda_i \marginaledges{B_i, C_{i - 1}} + 2\lambda_{i + 1} \marginaledges{B_{i + 1}, C_{i}} \quad . 
\]
Let us define $W = B_i \cup B_{i + 1}$. Due to the equality
\begin{equation}
\label{eq:marginal} 
	\marginaledges{I, A} - \marginaledges{J, A} = \marginaledges{I \setminus J, A \cup J},
	\quad\text{for}\quad J \subset I,
\end{equation}
the cost can be rewritten as
\[
	\abs{B_i}(Z_i - Z_{i + 1}) + \abs{W}Z_{i + 1} + 2\lambda_i \marginaledges{B_i, C_{i - 1}} + 2\lambda_{i + 1} (\marginaledges{W, C_{i - 1}} -  \marginaledges{B_{i}, C_{i - 1}} )
\]
by setting $I = W$, $J = B_i$ and $A = C_{i - 1}$.
We would like to vary $B_i$ while keeping the remaining variables constant; let us define
\[
\begin{split}
	g(U) & = \abs{U}(Z_i - Z_{i + 1}) + \abs{W}Z_{i + 1} + 2\lambda_i \marginaledges{U, C_{i - 1}} + 2\lambda_{i + 1} (\marginaledges{W, C_{i - 1}} -  \marginaledges{U, C_{i - 1}} ) \\
         & = \abs{U}(Z_i - Z_{i + 1}) - 2(\lambda_{i + 1} - \lambda_{i}) \marginaledges{U, C_{i - 1}} + \abs{W}Z_{i + 1} + 2\lambda_{i + 1} \marginaledges{W, C_{i - 1}}\quad.
\end{split}
\]
Note that the last two terms do not depend on $U$.
Due to optimality of $C_i$, we have $g(B_i) \leq g(B_i \cup Y)$,
or
\[
\begin{split}
	0  & \leq g(B_i \cup Y) - g(B_i) \\
	   & = \abs{Y}(Z_i - Z_{i + 1}) - 2(\lambda_{i + 1} - \lambda_{i}) (\marginaledges{B_i \cup Y, C_{i - 1}} - \marginaledges{B_i, C_{i - 1}}) \\
	   &  = \abs{Y}(Z_i - Z_{i + 1}) - 2(\lambda_{i + 1} - \lambda_{i}) \marginaledges{Y, C_{i}}, 
\end{split}
\]
where the last equality is due to Eq.~\ref{eq:marginal}.  We can rewrite the inequality as
\[
	Z_{i} - Z_{i + 1} \geq 2(\lambda_{i + 1} - \lambda_i)\density{Y, C_{i}} \geq 2(\lambda_{i + 1} - \lambda_i)\density{X, C_{i} \setminus X},
\]
where the last inequality follows from the fact that $X$ and $Y$ violate the local density of $C_i$, and
since $\lambda_{i + 1} \geq \lambda_i$.
We can rewrite the left-hand side and the right-hand side as
\[
	0 \leq  \abs{X}(Z_{i + 1} - Z_{i}) - 2(\lambda_{i + 1} - \lambda_{i}) \marginaledges{X, C_{i} \setminus X}
	= g(B_i) - g(B_i \setminus X),
\]
or $g(B_i \setminus X) \leq g(B_i)$.

We have shown that if there is $C_i$ that is not locally-dense, we can delete some vertices from $C_i$ without sacrificing the quality. 
We continue this until all $C_i$ are locally-dense; the process must end because at each step we reduce the size of some $C_i$.
\end{proof}

The proposition gives us means to compute the optimal segmentation.  First we
discover locally-dense decomposition, say, $\mathcal{L}$.  If the number of
subgraphs is less or equal than $k$, we are done.  Otherwise, we group
subgraphs until we reach $k$. The optimal grouping can be done with a dynamic
program. Write $o[i, j]$ to be the cost of partial $j$-segmentation using only
$L_0, \ldots, L_i$. We have the identity
\[
	o[i, j] = \min_{\ell < i} c[\ell, i] + o[\ell, j - 1],
	\quad\text{where}\quad
	c[\ell, i] = -\sum_{v \in L_i \setminus L_\ell} \log p(\adg{v}; \lambda)
\]
and $\lambda$ is the optimal parameter for modeling $L_i \setminus L_\ell$.
This identity allows us to compute $o[n, k]$ recursively with a dynamic program.
Note that the monotonicity of the segmentation---that is, the inner subgraphs
should be more dense---is automatically guaranteed.
We will refer to this algorithm as $\segment(\col{L}, k)$.

Computing $c[\ell, i]$ can be done in constant time. To see this, let $r =
\abs{L_i \setminus L_\ell}$ be the number of nodes in $L_i \setminus L_\ell$.
Let also 
\[
	q = \sum_{v \in \in L_i \setminus L_\ell} \adg{v} = 2 \marginaledges{L_i \setminus L_\ell, L_\ell}
\]
be the sum of all
adjusted degrees in $L_i \setminus L_\ell$. Note $r$ and $q$ can be maintained in constant time.
Then the corresponding costs for the geometric and exponential distributions are
\[
	c_{\mathit{geo}}[i, \ell] = -r \log \frac{r}{r + q} - q \log \frac{q}{r + q} \quad\text{and}\quad
	c_{\mathit{exp}}[i, \ell] = r + r\log\frac{q}{r} \quad.
\]

Let us consider computational complexity. Discovering locally-dense decomposition
can be done in $\bigo{n^2m}$ time, whereas the actual segmentation can be done in $\bigo{\ell^2k} \subseteq \bigo{n^2k}$ time,
where $\ell$ is the number of subgraphs in locally-dense decomposition.
In practice, $\ell \ll n$ so the segmentation step is relatively cheap. However, if $\ell$ is large,
it is possible to achieve $(1 + \epsilon)$ approximation for the segmentation in linear time~\citep{guha:06:estimate,tatti19segmentation}.

\subsection{Approximation algorithm}

As pointed out above, the bottleneck of the exact algorithm is the
locally-dense decomposition step. For large graphs we can significantly speed-up this step by
using the faster algorithm \peel.
The next proposition shows that this yields 2-approximation guarantee,
if we use the geometric distribution.

\begin{proposition}
\label{prop:geometrysegment}
Let $p$ be the geometric distribution.
Let $\mathcal{C}$ be the optimal segmentation, and let
$\mathcal{O} = \segment(\peel(G), k)$ be the optimal segmentation using the sets obtained from \peel.
Then $\cost{\mathcal{O}} \leq 2\cost{\mathcal{C}}$.
\end{proposition}

Before proving the result, we need to introduce some notation.
The geometric distribution can be written as
\[
	-\log p(x; \lambda) = \lambda x + Z(\lambda),
\]
where $Z(\lambda) \geq 0$ is the normalization constant.

To prove the result let us enumerate the vertices, that is, $V = \set{v_i}_{i = 1}^n$,
and assume that this order respects the optimal segmentation $\set{C_i}$, $v_i \in C_j$
implies that $v_{i - 1} \in C_j$.
Let $\set{\lambda_i}$ be the optimal parameters for $\set{C_i}$.
We write $\eta_i$ to be the parameter $\lambda_j$ that is used to model $\adg{v_i; C_j}$, where $C_j$ is the smallest subgraph containing $v_i$.
Write $Z = \sum Z(\eta_i)$ to be the sum of normalization constants. Note that $Z \geq 0$.
Given a sequence $X = x_1, \ldots, x_n$, we define
\[
	f(X) = Z + \sum_{i = 1}^n x_i \eta_i\quad.
\]
Define $A$ with $a_i = \adg{v_i}$.
Note that $f(A) = \cost{\mathcal{C}}$.

Define an order for vertex indices $\pr{o_i}_{i = 1}^n$, vertices with high
degree first, that is, $\dg{v_{o_i}} \geq \dg{v_{o_{i + 1}}}$.
Define a sequence $T$ with $t_i = \dg{v_{o_i}}$.

\begin{lemma}
	$f(T) \leq  \cost{\mathcal{C}}.$
\end{lemma}

\begin{proof}
Define $T'$ as $t'_i = \dg{v_i}$.  We argue first that $f(T') \leq \cost{\mathcal{C}} = f(A)$.
We can rewrite
\[
	f(T') = Z + \sum_{(v_i, v_j) \in E} \eta_i + \eta_j
\quad\text{and}\quad
	f(A) = Z + \sum_{(v_i, v_j) \in E} 2\eta_{\max(i, j)}\quad.
\]
Since $\eta_i \leq \eta_{i + 1}$, we have $f(T') \leq f(A)$.
To prove $f(T) \leq f(T')$, note that
\[
	x \alpha + y \beta \leq y \alpha + x \beta, \quad\text{for}\quad \alpha \leq \beta, \ x \geq y\quad.
\]
That is, let $\pr{q_i}$ be \emph{any} vertex order, and let $X$ be the degree sequence
$x_i = \dg{v_{q_i}}$. Then sorting the vertices with bubble sort from $\pr{q_i}$ to $\pr{o_i}$
will not increase the sum in $f$ at any step. Consequently, $f(T) \leq f(X)$. Since this
holds for any order, $f(T) \leq f(T')$,
which proves the lemma.
\end{proof}

Let $\pr{g_i}_{i = 1}^n$ be the reverse order of indices in which \peel removes the vertices, and
let $s_i$ be the degree of $v_{g_i}$ during its removal.

\begin{lemma}
$s_i \leq t_i$.
\end{lemma}

\begin{proof}
Consider two \emph{sets} $P = \set{v_{g_1}, \ldots, v_{g_{i - 1}}}$
and $Q = \set{v_{o_1}, \ldots, v_{o_{i - 1}}}$.
Assume that $P \neq Q$ when treated as sets, that is,
there are indices $j$ and $\ell$ with $j < i \leq \ell$
such that $g_j = o_\ell$. Let $h$ be the degree of $v_{g_j}$
when deleting $v_{g_i}$. Since \peel deletes the vertex with the smallest degree,
$s_i \leq h$.  Consequently, $s_i \leq h \leq \dg{v_{g_j}} = t_\ell \leq t_i$.

Assume the opposite case: $P = Q$.
Due to pigeonhole principle,
there is $\ell \geq i$ such that $o_\ell = g_i$.
Thus, $s_i \leq \dg{v_{g_i}} = t_\ell \leq t_i$. 
\end{proof}

\begin{proof}[Proof of Proposition~\ref{prop:geometrysegment}]
Define $B$ as $b_i = \adg{v_{g_i}}$. Note that $\adg{v_{g_i}} \leq 2s_i$. Thus,
\[
	2f(T) = 2Z + 2\sum_{v_i \in V} \eta_i t_i
	\geq 2Z + 2\sum_{v_i \in V} \eta_i s_i
	\geq 2Z + \sum_{v_i \in V} \eta_i b_i \geq Z + \sum_{v_i \in V} \eta_i b_i = f(B)\quad.
\]
Consider a segmentation $\mathcal{G}$ respecting the order $g_i$ and having the
same sizes as $\mathcal{C}$, $\abs{C_i} = \abs{G_i}$. The value $f(B)$
corresponds to the log-likelihood of $\mathcal{G}$ and the parameters
$\lambda_1, \ldots, \lambda_k$, and $\cost{\mathcal{G}}$ corresponds to the
log-likelihood of $\mathcal{G}$ and the optimized parameters. Thus,
$\cost{\mathcal{G}} \leq f(B)$.

We have shown that there is a segmentation respecting the order chosen by \peel
that is at most $2\cost{\mathcal{C}}$. Thus, the optimal segmentation
respecting the order is also at most $2\cost{\mathcal{C}}$.
The argument in the proof of Proposition~\ref{prop:segment} can be now used to show
that we can safely assume that the segmentation uses sets returned by \peel.
\end{proof}

We can show a similar result for the exponential distribution
as long as the original graph does not have any singletons.

\begin{proposition}
\label{prop:expsegment}
Let $p$ be the exponential distribution. Assume that $G$ has no singletons.
Let $\mathcal{C}$ be the optimal segmentation, and let
$\mathcal{O} = \segment(\peel(G), k)$ be the optimal segmentation using the sets obtained from \peel.
Then $\cost{\mathcal{O}} \leq 2\cost{\mathcal{C}}$.
\end{proposition}

\begin{proof}
Similarly to the geometric distribution, exponential distribution can be written as
\[
	-\log p(x; \lambda) = \lambda x + Z(\lambda)\quad.
\]
Let $Z$ be as defined in proof of Proposition~\ref{prop:geometrysegment}, that is, it is total sum
of the normalization constants. To prove the result we only need to show that $Z \geq 0$,
and we can use the proof of Proposition~\ref{prop:geometrysegment}. Note that $Z(\lambda) = -\log \lambda$,
and the optimal $\lambda$ for a segment $C_i$ is $1 / [2\density{C_i \setminus C_{i - 1}, C_{i - 1}}]$.
This leads to
\[
	Z = \sum_{i = 1}^k \abs{C_i} \log 2\density{C_i \setminus C_{i - 1}, C_{i - 1}}\quad.
\]
To prove the result we will show that $\density{C_i \setminus C_{i - 1}, C_{i - 1}} \geq 1 / 2$.
It is enough to prove the case $i = k$ as due to Proposition~\ref{prop:segment}
the densities are monotonic.

Let $X$ be any subset of vertices. As there are no singletons, $\dg{v} \geq 1$. This leads to
\[
	\density{X, V \setminus X} \geq \frac{1}{2 \abs{X}} \sum_{v \in X} \dg{v} \geq \frac{1}{2}\quad.
\]
Set $X = C_k \setminus C_{k - 1}$ to complete the proof.
\end{proof}

We should point out that these results also work if the graph has weights on
the edges. However, in such a case, Proposition~\ref{prop:expsegment} requires
weights to be larger than or equal to 1.

\section{Related work}
\label{sec:related}

This paper is an extension of previouly published
work~\citep{tatti:2015:density}, and in this extension we introduce the
segmentation problem, where we constrain the number of subgraphs.
\citet{danisch2017large} introduced an alternative iterative technique for
computing locally-dense decomposition that scales well in practice.

Our paper is related to previous work on discovering dense subgraphs, 
clique-like structures, and hierarchical communities. 
We review some representative work on these topics. 

\spara{Clique relaxations.}
The densest possible subgraph is a clique. 
Unfortunately finding large cliques is computationally
intractable~\citep{DBLP:conf/focs/Hastad96}. 
Additionally, the notion of clique does not provide a robust
definition for practical situations, as a few absent edges may
completely destroy the clique. 
To address these issues, researchers have come up with relaxed
clique definitions.  
A relaxation, $k$-plex was suggested by~\citet{seidman10kflex}. 
In a $k$-plex a vertex can have at most $k - 1$ absent edges.  
Unfortunately, discovering maximal $k$-plexes is also an \np-hard
problem~\citep{balasundaram:2011:kflex}. 
An alternative relaxation for a clique is the one of an $n$-clique, 
a maximal subgraph where each vertex is connected
to every vertex with a path, possibly outside of the subgraph, of at most
$n$-length~\citep{Bron:1973:AFC:362342.362367}.
So, according to this definition a clique is an $1$-clique.  
As maximal $n$-cliques may produce sparse graphs,
the concept of $n$-clans was also proposed by limiting the diameter of
the subgraph to be at most $n$~\citep{mokken:79:clans}. Since $1$-clan corresponds
to a maximal clique, discovering $n$-clans is a computationally intractable problem.

\spara{Quasi-cliques.}
For the definition of graph density we have chosen to
work with $\density{X}$, the average degree of the subgraph induced by
$X$. While this is a popular density definition, there are other alternatives. 
One such alternative would be to divide the number of edges present in the
subgraph with the total number of possible edges, that is,  
divide by ${n \choose 2}$.
This would give us a normalized density score that is 
between $0$ and $1$. 
Subgraphs that maximize this density definition are called 
{\em quasi-cliques}, and algorithms for enumerating all 
quasi-cliques, which can be exponentially many,  
have been proposed by~\citet{abello02clique} and~\citet{Uno:2010:EAS:1712671.1712672}.
However, the definition of quasi-cliques is problematic.
Note that a single edge already provides maximal density. 
Consequently additional objectives are needed.
One natural objective is to maximize the size of a graph with density
of $1$, however, this makes the problem equivalent to 
finding a maximal clique which, as mentioned above, is a
computationally-intractable problem~\citep{DBLP:conf/focs/Hastad96}.

\spara{Alternative definitions for density.}
Other definitions of graph density have been proposed. 
Recently, Tsourakakis proposed to measure density by counting
triangles, instead of counting edges~\citep{tsourakakis15triangle}.
Interestingly enough, it is possible to find an approximate densest
subgraph under this definition. 
An interesting future direction for our work is to study if the
decomposition proposed in this paper can be extended for the
triangle-density definition. 
Density definitions of the form $g(\abs{E}) - \alpha h(\abs{V})$, where
$g$ and $h$ are some increasing functions were studied by~\citet{DBLP:conf/kdd/TsourakakisBGGT13},
with specific focus on $h(x) = {x \choose 2}$. 
It not known whether the densest-subgraph problem according to this
definition is polynomially-time solvable or \np-hard.
Finally, a variant for $\density{X}$ adopted for directed graph,
along with polynomial-time discovery algorithm, was suggested by~\citet{khuller09dense}.
Such a definition could serve for defining decompositions of directed
graphs, which is also left for future work.

\spara{Hierarchical communities.}
A classic technique for modelling hierarchical nature of communities is with a
hierarchical blockmodel~\citep{clauset2008hierarchical}. Here we are given a tree, where the leaves are
the vertices of the original graph and each vertex in a tree is given a
probablility. We then model an edge $(u, v)$ with a probability given to the
lowest common ancestor of $u$ and $v$.
\citet{DBLP:conf/pkdd/TattiG13} studied a restricted version of this problem
where the tree yields a nested structure; inner communities being denser.
Unfortunately, no exact polynomial-time algorithm is known for the restricted or general problem.
On other hand, in the segmentation problem we based the model on degrees
and not individual edges. This allowed to us to solve the problem exactly.

\section{Experimental evaluation}
\label{sec:experiments}

We will now present our experimental evaluation.
We tested the two proposed algorithms, \decompose and \peel, 
for decomposing a graph into locally-dense subgraphs,
and we contrast the resulting decompositions against $k$-cores,
obtained with the \core algorithm.
We compare the three algorithms in terms of running time, 
decomposition size
(number of subgraphs they provide), 
and relative density of the subgraphs they return.
We also use the Kendall-$\tau$ to measure how similar are the
decompositions in terms of the order they induce on the graph
vertices.

\subsection{Experimental setup}

We performed our evaluation on 13 graphs of different sizes and
densities.
A short description of the graphs is given below, and their basic
characteristics can be found in Table~\ref{tab:basic}.

\begin{itemize}[itemsep=-1pt]
\item
{\dolphins:}
an undirected social network of frequent associations between dolphins
in a community living off Doubtful Sound in New Zealand.

\item
{\karate:}
the social network of friendships between members of a karate club at a US university in the 1970.

\item
{\lesmis:}
co-appearance of characters in Les Miserables novel by Victor Hugo.

\item
{\astro:}
a co-authorship network among arXiv Astro Physics publications. 

\item
{\enron:}
an e-mail communication network by Enron employees.

\item
{\fb:}
an ego-network obtained from Facebook.

\item
{\hepph:}
a co-authorship network among arXiv High Energy Physics publications. 

\item
{\dblp:}
a co-authorship network among computer science researchers. 

\item
{\gowalla:} a friendship network of \url{gowalla.com}.

\item
{\roadnet:} a road network of California, where 
vertices represent intersections and edges represent road segments.

\item
{\skitter:} an internet topology graph, obtained from traceroutes run daily in 2005.

\item
\airports: US flight traffic in January 2016\footnote{\url{http://www.transtats.bts.gov/}},
where vertices represent airports and weighted edges flight routes. The weights represent
the number of flights between two airports.

\item
\trains: UK train routes.\!\footnote{\url{http://data.atoc.org/}} 
The vertices represent medium or large exchange points (stations), while the weighted
edges represent scheduled routes. The weights represent the number of routes in a single week.
\end{itemize}

The first three datasets were obtained from UCIrvine 
Network Data Repository,\footnote{\url{http://networkdata.ics.uci.edu/index.php}}
and the remaining datasets, except for \airports and \trains, were obtained from Stanford SNAP Repository.\!\footnote{\url{http://snap.stanford.edu/data}}

We applied \core, \peel, and \decompose to every dataset. 
We used a computer equipped with 3GHz Intel Core i7 and 8GB of RAM.\!\footnote{The implementation is available at\\ \url{https://version.helsinki.fi/dacs}}

\begin{table}[t]
\caption{Basic characteristics of the datasets and the running times of the algorithms.}
\label{tab:basic}
\begin{tabular*}{\columnwidth}{@{\extracolsep{\fill}}l@{\hspace{2mm}}rr rrr}
\toprule
& & & \multicolumn{3}{l}{running time} \\
\cmidrule{4-6}
Name & $n$ & $m$ & \core & \peel & \decompose \\
\midrule
{\dolphins} &
62 & 159 &
1ms & 1ms & 2ms
\\
{\karate} &
34 & 78 &
1ms & 1ms & 2ms
\\
{\lesmis} &
77 & 254  &
2ms & 2ms & 3ms
\\[1mm]
{\astro} &
18\,772 & 396\,160 &
0.4s & 0.4s & 2s 
\\
{\enron} &
36\,692 & 183\,831  &
0.3s & 0.3s & 2s
\\
{\fb} &
747 & 30\,025 &
44ms & 44ms & 0.2s
\\
{\hepph} &
12\,008 & 237\,010 &
0.2s & 0.2s & 0.9s
\\[1mm]
{\dblp} &
317\,080 & 1\,049\,866 &
2s & 2s & 14s
\\
{\gowalla} &
196\,591 & 950\,327 &
2s & 2s & 9s
\\
{\roadnet} &
1\,965\,206 & 2\,766\,607 &
7s & 8s & 1m6s 
\\
{\skitter} &
1\,696\,415 & 11\,095\,298 &
21s & 21s  & 1m46s
\\[1mm]
{\airports} &
294 & 3\,995 &
11ms & 10ms & 27ms \\
{\trains} &
363 & 1\,357 &
7ms & 7ms & 23ms \\
\bottomrule
\end{tabular*}
\end{table}

\subsection{Results}

We begin by reporting the running times of the three algorithms for
all of our datasets. 
They are shown in Table~\ref{tab:basic}. 
As expected, the linear-time algorithms \core and \peel are both very fast;
the largest graph with 11 million edges and 1.7 million vertices is
processed in 21 seconds.  
However, we are also able to run the exact decomposition for all the graphs in reasonable
time, despite its running-time complexity of
$\bigo{n^2m}$. 
It takes less than 2 minutes for \decompose to process the largest
graph.  
There are three reasons that contribute to achieving this performance.  
First, we need to compute the minimum cut only $\bigo{k}$ times, where
$k$ is the number of locally-dense graphs. 
In practice, $k$ is much smaller than the number of vertices. 
Second, computing minimum cut in practice is faster than the theoretical $\bigo{nm}$ bound.
Third, as described in Section~\ref{sec:discovery}, most of the minimum cuts
are computed using subgraphs. While in theory these subgraphs can be 
as large as the original graph, in practice these subgraphs are
significantly smaller.

\begin{table}[t]
\caption{Smallest ratio of the profile function,  and the profile function of the exact solution as defined in Equation~(\ref{eq:ratio}), and the ratio of
the most inner discovered subgraph versus the actual densest subgraph.}
\label{tab:ratio}
\begin{tabular*}{\columnwidth}{@{\extracolsep{\fill}}lrrrrr}
\toprule
& \multicolumn{2}{c}{$r(\col{C}, \col{B})$} && \multicolumn{2}{c}{$\density{C_1} / \density{B_1}$}\\
\cmidrule{2-3} \cmidrule{5-6}
Name & \core & \peel && \core & \peel\\
\midrule
{\dolphins} &
0.94 & 0.83 && 0.98 & 0.98
\\
{\karate} &
0.95 & 0.99 && 0.95 & 0.99
\\
{\lesmis} &
0.86 & 0.87 && 0.96 & 1.00
\\[1mm]
{\astro} &
0.85 & 0.85 && 0.87 & 0.92
\\
{\enron} &
0.83 & 0.82 && 0.94 & 1.00
\\
{\fb} &
0.69 & 0.74 && 0.91 & 1.00
\\
{\hepph} &
0.74 & 0.75 && 1.00 & 1.00
\\[1mm]
{\dblp} &
0.80 & 0.86 && 1.00 & 1.00
\\
{\gowalla} &
0.89 & 0.92 && 0.87 & 1.00
\\
{\roadnet} &
0.81 & 0.87 && 0.84 & 0.87
\\
{\skitter} &
0.73 & 0.84 && 0.84 & 1.00
\\

\airports &
0.75 & 0.90 & & 0.93 & 1.00 \\
\trains &
0.60 & 0.84 & & 0.82 & 0.96 \\
\bottomrule
\end{tabular*}
\end{table}

\begin{figure*}[t]
\begin{tikzpicture}
	\begin{axis}[xlabel={index $i$},ylabel= {$\prof{i}$},
		width = 2.6cm,
		cycle list name=yaf,
		scale only axis,
		title = {{\lesmis}},
		ymin = 0,
		xmin = 0,
		no markers
]
\addplot+[const plot] table[x index = 2, y index = 0, header = false] {results/lesmis_c.plot};
\addplot+[const plot] table[x index = 2, y index = 0, header = false] {results/lesmis_g.plot};
\addplot+[const plot] table[x index = 2, y index = 0, header = false] {results/lesmis_e.plot};
\pgfplotsextra{\yafdrawaxis{0}{77}{0}{5.4}}
\end{axis}
\end{tikzpicture}%
\begin{tikzpicture}
	\begin{axis}[xlabel={index $i$},
		width = 2.6cm,
		cycle list name=yaf,
		scale only axis,
		title = {{\fb}},
		ymin = 0,
		xmin = 0,
		no markers
]
\addplot+[const plot] table[x index = 2, y index = 0, header = false] {results/f1912_c.plot};
\addplot+[const plot] table[x index = 2, y index = 0, header = false] {results/f1912_g.plot};
\addplot+[const plot] table[x index = 2, y index = 0, header = false] {results/f1912_e.plot};
\pgfplotsextra{\yafdrawaxis{0}{747}{0}{108}}
\end{axis}
\end{tikzpicture}%
\begin{tikzpicture}
	\begin{axis}[xlabel={index $i$},
		width = 2.6cm,
		cycle list name=yaf,
		scale only axis,
		title = {{\astro}},
		ymin = 0,
		scaled ticks = false,
		x tick label style={/pgf/number format/1000 sep={\,}},
		xtick = {0, 8000, 16000},
		no markers,
]
\addplot+[const plot] table[x index = 2, y index = 0, header = false] {results/ca-astro_c.plot};
\addplot+[const plot] table[x index = 2, y index = 0, header = false] {results/ca-astro_g.plot};
\addplot+[const plot] table[x index = 2, y index = 0, header = false] {results/ca-astro_e.plot};
\pgfplotsextra{\yafdrawaxis{0}{18772}{0}{47}}
\end{axis}
\end{tikzpicture}%
\begin{tikzpicture}
	\begin{axis}[xlabel={index $i$},
		width = 2.6cm,
		cycle list name=yaf,
		scale only axis,
		title = {{\hepph}},
		ymin = 0,
		scaled ticks = false,
		x tick label style={/pgf/number format/1000 sep={\,}},
		no markers,
		legend entries = {\core, \peel, \decompose}
]
\addplot+[const plot] table[x index = 2, y index = 0, header = false] {results/ca-hepph_c.plot};
\addplot+[const plot] table[x index = 2, y index = 0, header = false] {results/ca-hepph_g.plot};
\addplot+[const plot] table[x index = 2, y index = 0, header = false] {results/ca-hepph_e.plot};
\pgfplotsextra{\yafdrawaxis{0}{12008}{0}{120}}
\end{axis}
\end{tikzpicture}
\caption{\label{fig:profile}Profile functions for {\lesmis}, {\fb}, {\astro}, and {\hepph}.}
\end{figure*}

Next, we compare how well \core and \peel approximate the exact
locally-dense decomposition. In order to do that we compute the ratio
\begin{equation}
\label{eq:ratio}
	r( \col{C}, \col{B}) = \min_i \frac{\prof{i; \col{C}}}{\prof{i; \col{B}}},
\end{equation}
where $\col{B}$ is the locally-dense decomposition and $\col{C}$ is
obtained by either from \peel or \core. 
These ratios are shown in Table~\ref{tab:ratio}.
We also compare $\prof{1; \col{C}}/\prof{1; \col{B}}$, that is, the ratio
of density for the  inner most subgraph in $\col{C}$ against the density of
$\col{B}_1$, the densest subgraph.
Propositions~\ref{prop:peel-approximation}
and~\ref{prop:core-approximation} guarantee that there ratios are at
least~$1/2$.  In practice, the ratios are larger, typically over $0.8$. 
In most cases, but not always, \peel obtains better ratios than
\core. 
When comparing the ratio for the inner most subgraph, \peel, by
design, will always be better or equal than \core. 
We see that only in three
datasets \core is able to find the same subgraph as \peel.

\begin{table}[t]
\caption{Sizes of the discovered decompositions and Kendall-$\tau$
  statistics between the decompositions.
\textsc{E}~stands for \decompose,
\textsc{G} for \peel, and
\textsc{C} for \core.
}
\label{tab:kendall}
\begin{tabular*}{\columnwidth}{@{\extracolsep{\fill}}lrrr rrr}
\toprule
Name & \core & \peel & \decompose & \textsc{c}-vs-\textsc{e} & \textsc{g}-vs-\textsc{e} & \textsc{c}-vs-\textsc{g} \\
\midrule
{\dolphins} &
4 & 6 & 7 &
0.76 & 0.77 & 0.99
\\
{\karate} &
4 & 3 & 4 &
0.80 & 0.95 & 0.78
\\
{\lesmis} &
8 & 8 & 9 &
0.94 & 0.99 & 0.95
\\[1mm]
{\astro} &
52 & 83 & 435  &
0.93 & 0.93 & 0.99
\\
{\enron} &
43 & 162 & 357 &
0.92 & 0.92 & 0.99
\\
{\fb} &
87 & 55 & 75  &
0.95 & 0.98 & 0.97
\\
{\hepph} &
64 & 63 & 283  &
0.93 & 0.93 & 0.98
\\[1mm]
{\dblp} &
47 & 97  & 1087 &
0.88 & 0.89 & 0.97 
\\
{\gowalla} &
51 & 161 & 899 &
0.97 & 0.96 & 0.98
\\
{\roadnet} &
3 & 43 & 2710&
0.57 & 0.80 & 0.68
\\
{\skitter} &
111 & 266 & 3501 &
0.98 & 0.97 & 0.99
\\[1mm]
\airports & 221 & 200 & 219 &
0.99 &
0.99 &
0.996 \\
\trains & 187 & 59 & 156 &
0.87 &
0.89 &
0.98 \\

\bottomrule
\end{tabular*}
\end{table}

Let us now compare the different solutions found by the three
algorithms. 
In Table~\ref{tab:kendall} we report the sizes of discovered communities and their
Kendall-$\tau$ statistics, which compares the ordering of the vertices
induced by the decompositions. 
In particular, the Kendall-$\tau$ statistic is computed by assigning each
vertex an index based on which subgraph the vertex belongs. 
To handle ties, we use the $b$-version of Kendall-$\tau$, 
as given by~\citet{agresti10ordinal}. 
If the statistic is 1, the decompositions are equal.

Our first observation is that
typically the locally-dense decomposition algorithms return more
subgraphs than the $k$-core decomposition.
As an extreme example, {\roadnet} contains only 3 $k$-cores while \peel finds 43
subgraphs and \decompose finds 2710. 
This can be explained by the fact that the vertices in the graph have
low degrees, which results in a very coarse $k$-core decomposition. 
On the other hand, \decompose and \peel exploit density to discover
more fine-grained decompositions.
This result is similar to what we presented in the
Example~\ref{ex:toy} in the introduction.

The Kendall-$\tau$ statistics are typically close to $1$, especially for large
datasets suggesting that all 3 methods result in similar decompositions.
The statistic between \core and \peel is typically larger than to
the exact solution. This is expected since \core and \peel use the exact same
order for vertices---the only difference between these two methods is how they
partition the vertex order. In addition, decompositions produced by \peel
are closer to the exact solution than the decompositions produced by \core,
which is also a natural result.

Let us now compare the solutions in terms of profile functions as
defined in Definition~\ref{definition:profile}.
We illustrate several prototypical examples of such profile functions
in Figure~\ref{fig:profile}. 
We see that \peel produces similar profiles as the
exact locally-dense decomposition. 
We also see that \core does not respect the local density constraint. 
In {\fb}, {\astro}, and {\hepph} there exist $k$-shells that are
denser than their inner shells, that is, joining these shells would increase
the density of the inner shell.  
\peel does not have this problem since by
definition it will have a monotonically decreasing profile.

\begin{figure}[h]
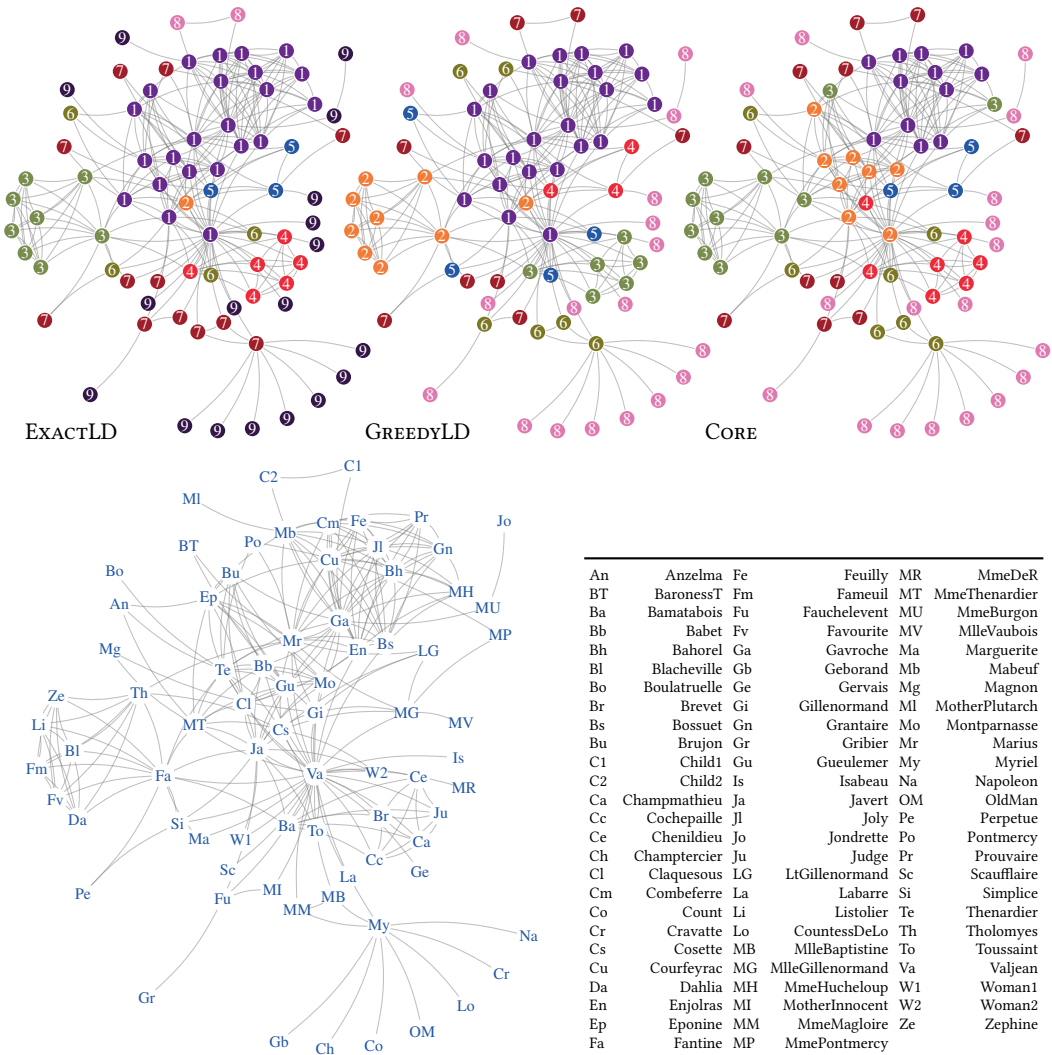


\tikzstyle{le} = [thin, gray, opacity = 0.5, bend left = 10]

\tikzstyle{lm} = [thick, fill=white, inner sep = 0pt, thick, text width = 1pt, text height = 1pt]
\tikzstyle{lc} = [circle, text width = 4pt, text height = 4pt]
\tikzstyle{lm0} = [lm, fill = yafcolor1, lc]
\tikzstyle{lm1} = [lm, fill = yafcolor2, lc]
\tikzstyle{lm2} = [lm, fill = yafcolor3, lc]
\tikzstyle{lm3} = [lm, fill = yafcolor4, lc]
\tikzstyle{lm4} = [lm, fill = yafcolor5, lc]
\tikzstyle{lm5} = [lm, fill = yafcolor6!50!black, lc]
\tikzstyle{lm6} = [lm, fill = yafcolor7, lc]
\tikzstyle{lm7} = [lm, fill = yafcolor8, lc]
\tikzstyle{lm8} = [lm, fill = yafcolor1!50!black, lc]

\tikzstyle{ll} = [font = \tiny, text = white]
\tikzstyle{ll0} = [ll]
\tikzstyle{ll1} = [ll]
\tikzstyle{ll2} = [ll]
\tikzstyle{ll3} = [ll]
\tikzstyle{ll4} = [ll]
\tikzstyle{ll5} = [ll]
\tikzstyle{ll6} = [ll]
\tikzstyle{ll7} = [ll]
\tikzstyle{ll8} = [ll]

\setlength{\tabcolsep}{-7pt}
\begin{tabular*}{\textwidth}{lll}
\begin{tikzpicture}[scale = 1]
\fontfamily{ptm}\selectfont
\input{lesmis_e}
\end{tikzpicture} &
\begin{tikzpicture}[scale = 1]
\fontfamily{ptm}\selectfont
\input{lesmis_g}
\end{tikzpicture} &
\begin{tikzpicture}[scale = 1]
\fontfamily{ptm}\selectfont
\input{lesmis_c}
\end{tikzpicture}%
\\[-5mm]
\quad\small\decompose & \quad\small\peel & \quad\small\core
\end{tabular*}

\tikzstyle{le} = [thin, gray, opacity = 0.5, bend left = 10]
\tikzstyle{lm} = [fill=white, inner sep = 0pt, circle, text width = 6pt, text height = 6pt]
\tikzstyle{ll} = [font = \tiny, text = yafcolor5]

\setlength{\tabcolsep}{0pt}
\begin{tabular*}{\textwidth}{l@{\hspace{5mm}}l}
\begin{tikzpicture}[scale = 1.4]
\fontfamily{ptm}\selectfont
\input{lesmis}
\end{tikzpicture}%
&
\begin{minipage}[b]{10cm}
\tiny
\setlength{\tabcolsep}{2pt}
\begin{tabular}{lrlrlr}
\toprule
An & Anzelma & Fe & Feuilly & MR & MmeDeR  \\
BT & BaronessT & Fm & Fameuil & MT & MmeThenardier  \\
Ba & Bamatabois & Fu & Fauchelevent & MU & MmeBurgon  \\
Bb & Babet & Fv & Favourite & MV & MlleVaubois  \\
Bh & Bahorel & Ga & Gavroche & Ma & Marguerite  \\
Bl & Blacheville & Gb & Geborand & Mb & Mabeuf  \\
Bo & Boulatruelle & Ge & Gervais & Mg & Magnon  \\
Br & Brevet & Gi & Gillenormand & Ml & MotherPlutarch  \\
Bs & Bossuet & Gn & Grantaire & Mo & Montparnasse  \\
Bu & Brujon & Gr & Gribier & Mr & Marius  \\
C1 & Child1 & Gu & Gueulemer & My & Myriel  \\
C2 & Child2 & Is & Isabeau & Na & Napoleon  \\
Ca & Champmathieu & Ja & Javert & OM & OldMan  \\
Cc & Cochepaille & Jl & Joly & Pe & Perpetue  \\
Ce & Chenildieu & Jo & Jondrette & Po & Pontmercy  \\
Ch & Champtercier & Ju & Judge & Pr & Prouvaire  \\
Cl & Claquesous & LG & LtGillenormand & Sc & Scaufflaire  \\
Cm & Combeferre & La & Labarre & Si & Simplice  \\
Co & Count & Li & Listolier & Te & Thenardier  \\
Cr & Cravatte & Lo & CountessDeLo & Th & Tholomyes  \\
Cs & Cosette & MB & MlleBaptistine & To & Toussaint  \\
Cu & Courfeyrac & MG & MlleGillenormand & Va & Valjean  \\
Da & Dahlia & MH & MmeHucheloup & W1 & Woman1  \\
En & Enjolras & MI & MotherInnocent & W2 & Woman2  \\
Ep & Eponine & MM & MmeMagloire & Ze & Zephine  \\
Fa & Fantine & MP & MmePontmercy  \\
\bottomrule
\end{tabular}\\
\end{minipage}
\end{tabular*}

\caption{Decompositions of the {\lesmis} dataset.
The upper three graphs show the decompositions of each method using numbers and colors.
The lower graph shows the abbreviated names; the table provides mapping from abbreviations
to full names.}
\label{fig:lesmis}
\end{figure}

In Figure~\ref{fig:lesmis} we present the
decompositions obtained by the three algorithms for the {\lesmis} graph.
We see that \peel obtains very similar result to the exact solution,
the only difference is the second subgraph and the third subgraph are merged and
the $7$th subgraph (in \decompose) lends vertices to the 8th last subgraph.  
While \peel has the same first subgraph as the exact solution, which is the densest
subgraph, \core breaks this subgraph into 3 subgraphs. 
Interestingly enough, the protagonist of the book, Jean Valjean, is
not placed into the first shell by \core. 

Next, we present our result with segmentation. First we computed the cost of
optimal segmentation as a function of the number of segments $k$.  Here, we
used exponential distribution as the underlying model.  The normalized scores
are shown in left plot of Figure~\ref{fig:segmentation}.  The scores behave similarly for
all datasets: they improve quickly at the very beginning (for $k = 1, \ldots,
10$), after which they settle to a relatively stable value. This value depends
on the dataset.

Next, we study how well can approximate the segmentation by using \peel instead
of the exact solution. The results are shown in the right plot of Figure~\ref{fig:segmentation}.
Here, we plot the relative difference between the approximate solution and the
optimal solution. Ideally, the difference should be 0, and
Proposition~\ref{prop:expsegment} states that it is at most 1. We see that in practice
the estimates are really close to each other: all differences are within $0.006$.
The approximation is better for smaller $k$. This is a natural result as
there is less room for disagreement in more coarse segmentations.

\begin{figure}[h]
\begin{tikzpicture}
\pgfkeys{/pgf/number format/.cd,
sci generic={mantissa sep=\times,exponent={10^{#1}}}}
\begin{axis}[xlabel={number of segments, $k$}, ylabel= {$\frac{\cost{\col{O}}}{\cost{\set{V}}}$},
        width = 3.6cm,
		height = 2.5cm,
        cycle list name=yaf,
        scale only axis,
        scaled ticks = false,
		xtick = {1, 10, 20, 30, 40},
		every axis y label/.style = {at = {(ticklabel cs:0.5)}, anchor=east, font=\scriptsize, inner sep=0pt},
        no markers,
]
\addplot table[x index = 0, y expr = {\thisrowno{2} }, header = false] {results/enron.segcost}
	node[anchor = west, inner sep = 1pt, color=black, font=\scriptsize] {\enron};
\addplot table[x index = 0, y expr = {\thisrowno{2} }, header = false] {results/as-skitter.segcost}
	node[anchor = west, inner sep = 1pt, color=black, font=\scriptsize] {\skitter};
\addplot table[x index = 0, y expr = {\thisrowno{2} }, header = false] {results/ca-hepph.segcost}
	node[anchor = west, inner sep = 1pt, color=black, font=\scriptsize] {\hepph};
\addplot table[x index = 0, y expr = {\thisrowno{2} }, header = false] {results/ca-astro.segcost}
	node[anchor = west, inner sep = 1pt, color=black, font=\scriptsize, yshift=2pt] {\astro};

\addplot table[x index = 0, y expr = {\thisrowno{2} }, header = false] {results/dblp.segcost}
	node[anchor = west, inner sep = 1pt, color=black, font=\scriptsize, yshift = -2pt] {\dblp};
\addplot table[x index = 0, y expr = {\thisrowno{2} }, header = false] {results/gowalla.segcost}
	node[anchor = west, inner sep = 1pt, color=black, font=\scriptsize] {\gowalla};
\addplot table[x index = 0, y expr = {\thisrowno{2} }, header = false] {results/f1912.segcost}
	node[anchor = west, inner sep = 1pt, color=black, font=\scriptsize, yshift=+2pt] {\fb};
\addplot table[x index = 0, y expr = {\thisrowno{2} }, header = false] {results/roadnet.segcost}
	node[anchor = west, inner sep = 1pt, color=black, font=\scriptsize] {\roadnet};

\pgfplotsextra{\yafdrawaxis{1}{40}{0.785}{1}}
\end{axis}
\end{tikzpicture}
\hfill
\begin{tikzpicture}
\pgfkeys{/pgf/number format/.cd,
sci generic={mantissa sep=\times,exponent={10^{#1}}}}
\begin{axis}[xlabel={number of segments, $k$}, ylabel= {$\frac{\cost{\col{C}} - \cost{\col{O}}}{\cost{\col{O}}}$},
        width = 3.6cm,
		height = 2.5cm,
        cycle list name=yaf,
        scale only axis,
        scaled ticks = false,
		xtick = {1, 10, 20, 30, 40},
		every axis y label/.style = {at = {(ticklabel cs:0.5)}, anchor=east, font=\scriptsize, inner sep=0pt},
        no markers,
]
\addplot table[x index = 0, y expr = {\thisrowno{4} / \thisrowno{1} - 1}, header = false] {results/enron.segcost}
	node[anchor = west, inner sep = 1pt, color=black, font=\scriptsize] {\enron};
\addplot table[x index = 0, y expr = {\thisrowno{4} / \thisrowno{1} - 1}, header = false] {results/as-skitter.segcost}
	node[anchor = west, inner sep = 1pt, color=black, font=\scriptsize] {\skitter};
\addplot table[x index = 0, y expr = {\thisrowno{4} / \thisrowno{1} - 1}, header = false] {results/ca-hepph.segcost}
	node[anchor = west, inner sep = 1pt, color=black, font=\scriptsize] {\hepph};
\addplot table[x index = 0, y expr = {\thisrowno{4} / \thisrowno{1} - 1}, header = false] {results/ca-astro.segcost}
	node[anchor = west, inner sep = 1pt, color=black, font=\scriptsize, yshift=2pt] {\astro};

\addplot table[x index = 0, y expr = {\thisrowno{4} / \thisrowno{1} - 1}, header = false] {results/dblp.segcost}
	node[anchor = west, inner sep = 1pt, color=black, font=\scriptsize, yshift = -2pt] {\dblp};
\addplot table[x index = 0, y expr = {\thisrowno{4} / \thisrowno{1} - 1}, header = false] {results/gowalla.segcost}
	node[anchor = west, inner sep = 1pt, color=black, font=\scriptsize] {\gowalla};
\addplot table[x index = 0, y expr = {\thisrowno{4} / \thisrowno{1} - 1}, header = false] {results/f1912.segcost}
	node[anchor = west, inner sep = 1pt, color=black, font=\scriptsize, yshift=-2pt] {\fb};
\addplot table[x index = 0, y expr = {\thisrowno{4} / \thisrowno{1} - 1}, header = false] {results/roadnet.segcost}
	node[anchor = west, inner sep = 1pt, color=black, font=\scriptsize, yshift=1pt] {\roadnet};

\pgfplotsextra{\yafdrawaxis{1}{40}{0}{0.006}}
\end{axis}
\end{tikzpicture}

\caption{Ratios of segmentation costs as a function of $k$ the number of segments. Here, $\col{O}$ is the optimal solution,
$\col{C}$ is the approximate solution using \peel, and $\cost{\set{V}}$ is the cost of having just one segment.
The left figure shows the optimal costs normalized by $\cost{\set{V}}$.
The right figure shows how well we approximate the exact solution by using \peel, the lower the better, ideally at 0.
\label{fig:segmentation}
}

\end{figure}

\begin{figure}[h]
\includegraphics[width = \textwidth]{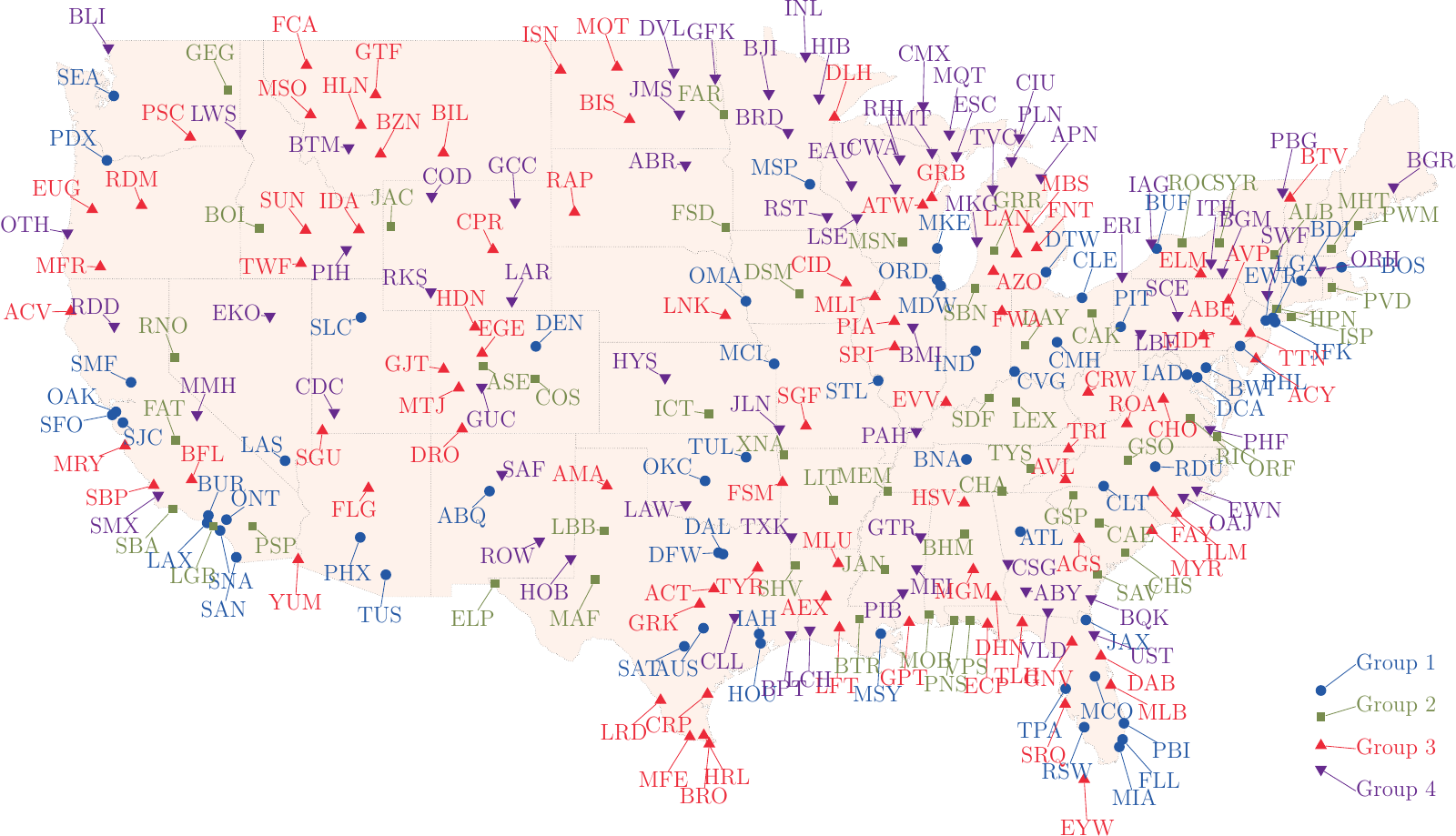}
\caption{\airports data, 4-segmentation using exponential distribution and
\decompose based on traffic data. Segments are indicated by color and shapes.
Groups with smaller indices consists of central airports that are connected
to other central airports.
}
\label{fig:flights}
\end{figure}

\begin{figure}[h]
\includegraphics[height = 17cm]{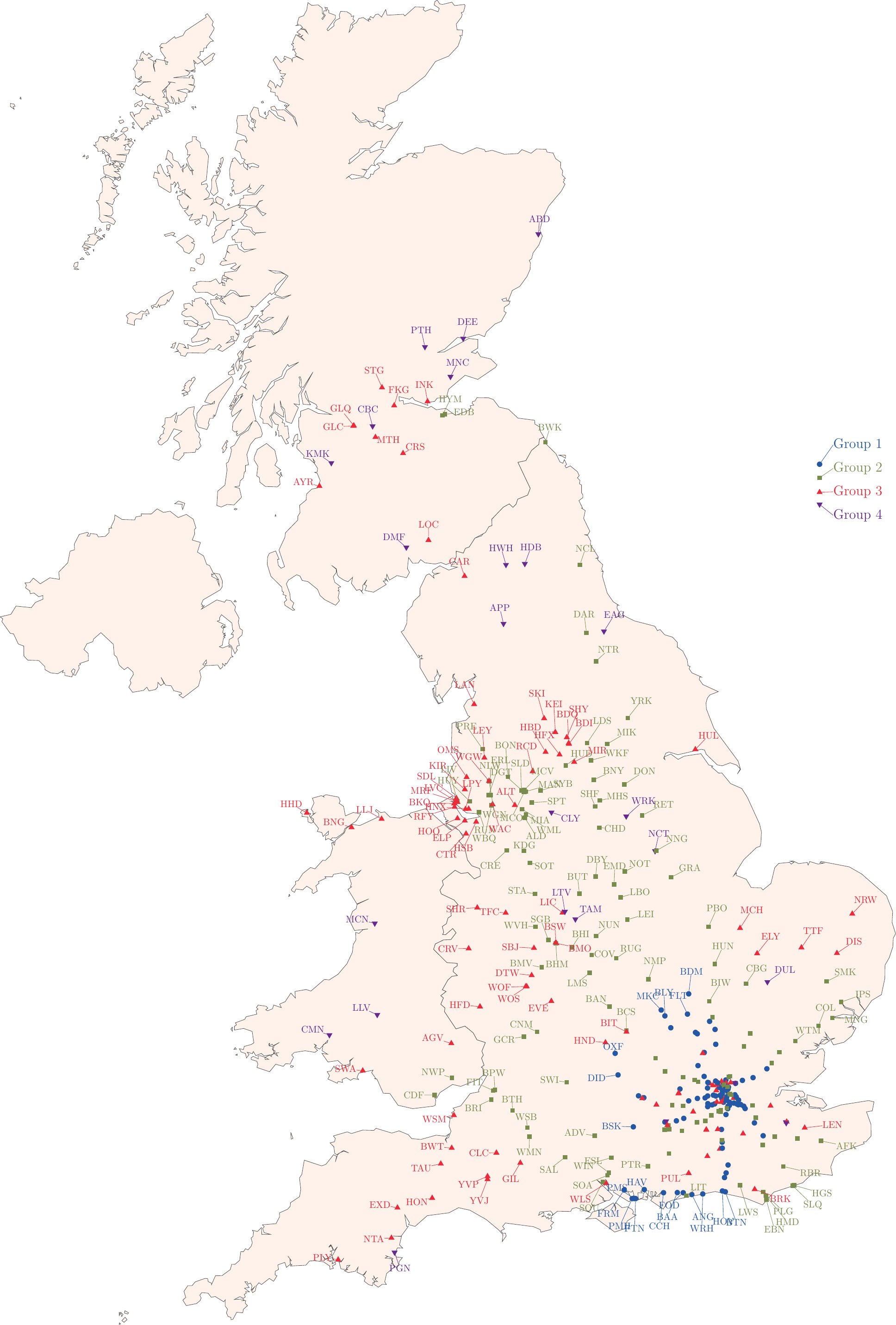}
\caption{\trains data, 4-segmentation using exponential distribution and \decompose based on traffic data. Segments are indicated by color and shapes.
To avoid clutter, London area is not annotated; see Figure~\ref{fig:london} for the zoom-in.
Groups with smaller indices consists of central stations that are connected
to other central stations.
}
\label{fig:uk}
\end{figure}

\begin{figure}[h]
\includegraphics[width = \textwidth]{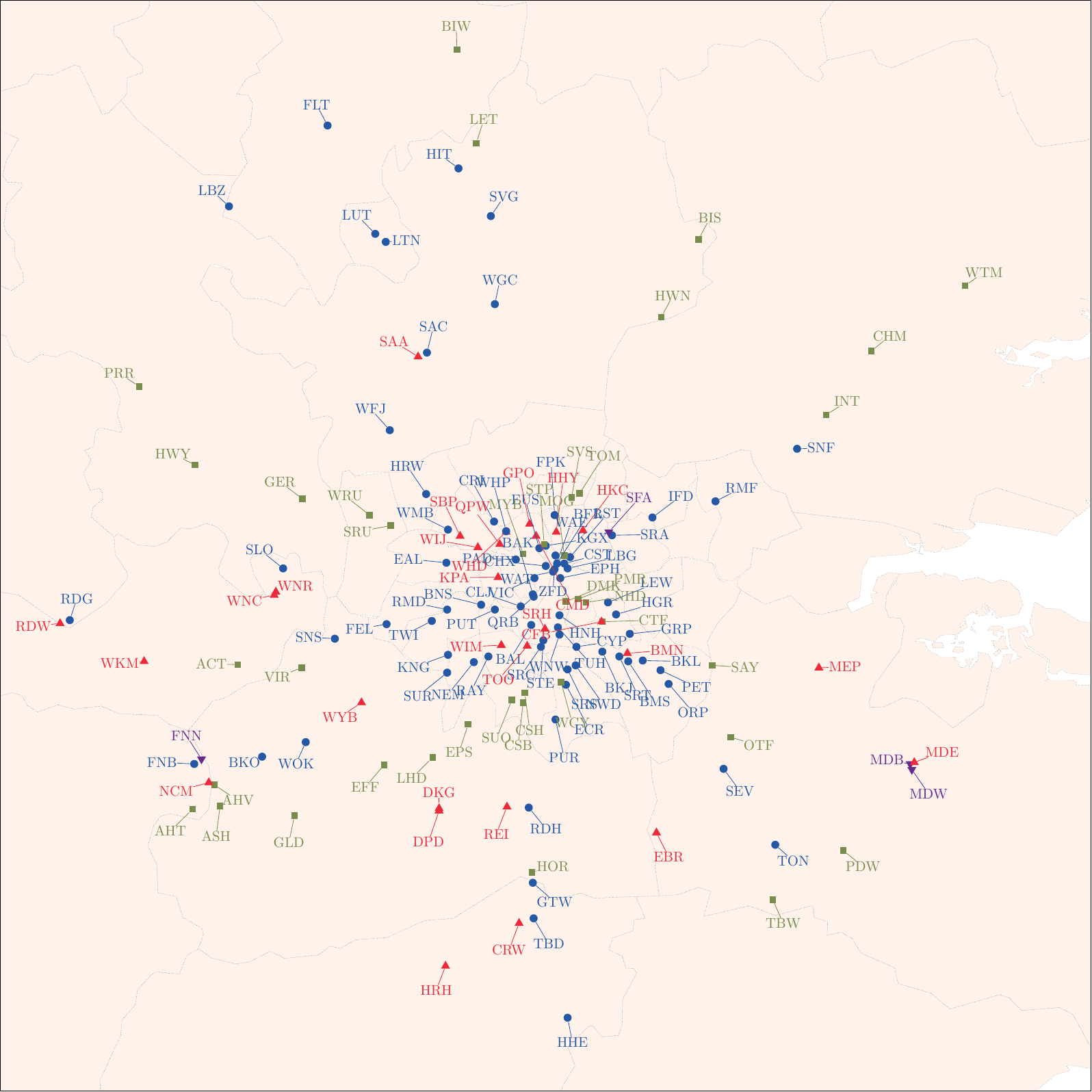}
\caption{London area zoom-in of \trains data, 4-segmentation using exponential distribution and \decompose. Segments are indicated by color and shapes.
For the whole map, see Figure~\ref{fig:uk}.
}
\label{fig:london}
\end{figure}

Finally, let us look on segmentations obtained from \trains and \airports data.
Our goal is to discover which locations, that is, train stations or airports, are central.
Here, by centrality we mean that a central location is well-connected with
others central locations. To quantify this notion we use locally-dense subgraphs.
Note that the number of locally-dense subgraphs is relatively large in these
graphs; this is due to the fact that the graphs are weighted. We were interested
to group the locations in 4 categories. So to reduce the 
the size of decomposition, we solved segmentation problem with $k = 4$ and the exponential model.
The results are shown in Figure~\ref{fig:flights}--\ref{fig:london}. 

The discovered \trains segmentation shows that the densest segment occurs in the vicinity
of London, as expected. There is also a strong concentration of the second densest segment around
Manchester/Liverpool area while the stations in Scotland, apart from the capital Edinburgh, are in outer segments.
For \airports, we see that the inner segments consists of large well-connected
airports, such as JFK, DFW, ATL, or ORD,  while the smaller, regional, airports
are assigned to the outer segments.

\section{Conclusions}
\label{sec:conclusions}

Inspired by $k$-core analysis and density-based graph mining,
we propose density-friendly graph decomposition,
a new tool for analyzing graphs. 
Like $k$-core decomposition, 
our approach decomposes a given graph into a nested sequence of
subgraphs
These subgraphs have the property that the inner subgraphs are always denser than the outer ones;
additionally the most inner subgraph is the densest one---properties that the $k$-cores do not satisfy.

We provide two efficient algorithms to discover such a decomposition. 
The first algorithm is based on minimum cut and it extends the exact
algorithm of Goldberg for the densest-subgraph problem.
The second algorithm extends a linear-time algorithm by Charikar for
approximating the same problem. 
The second algorithm runs in linear time, and thus, in addition to
finding subgraphs that respect better the density structure of the
graph, it is as efficient as the $k$-core decomposition algorithm.

In addition to offering a new alternative for decomposing a graph into
dense subgraphs, 
we significantly extend the analysis, the understanding, and the
applicability of previous well-known graph algorithms: 
Goldberg's exact algorithm and
Charikar's approximation algorithm for 
finding the densest subgraph, 
as well as the $k$-core decomposition algorithm itself.

Finally, we considered a constrained version of the problem, where
we restrict the number of subgraphs. We do this by designing a model
based on segmentation. The likelihood of this model is then optimized,
and we show that we can do this either exactly or estimate this efficiently
by a factor of 2.

%\FloatBarrier

%
\bibliographystyle{ACM-Reference-Format}
\bibliography{bibliography}  % sigproc.bib is the name of the Bibliography in this case
\end{document}